\numberwithin{equation}{section}
\theoremstyle{prop}
\theoremstyle{proof}
\newtheorem{lem}{Lemma}
\begin{document}

\title{Augmented Dynamic Gordon Growth Model}
\author{Battulga Gankhuu\footnote{
Department of Applied Mathematics, National University of Mongolia, Ulaanbaatar, Mongolia;
E-mail: battulga.g@seas.num.edu.mn}}
\date{}

\maketitle 

\begin{abstract}
In this paper, we introduce a dynamic Gordon growth model, which is augmented by a time--varying spot interest rate and the Gordon growth model for dividends. Using the risk--neutral valuation method and locally risk--minimizing strategy, we obtain pricing and hedging formulas for the dividend--paying European call and put options and equity--linked life insurance products. Also, we provide ML estimator of the model.\\[3ex]

\textbf{Keywords:} European options, equity--linked life insurance, dynamic Gordon growth model, locally--risk minimizing strategy, ML estimators.
\end{abstract}

\section{Introduction}

Dividend discount models (DDMs), first introduced by \citeA{Williams38}, are common methods for equity valuation. The basic idea is that the market value of the equity of a firm is equal to the present value of a sum of a dividend paid by the firm and the price of the firm, which correspond to the next period. The same idea can be used to value the liabilities of the firm. As the outcome of DDMs depends crucially on dividend payment forecasts, most research in the last few decades has been around the proper estimations of dividend development. Also, parameter estimation of DDMs is a challenging task. Recently, \citeA{Battulga22a} introduced parameter estimation methods for practically popular DDMs. To estimate parameters of the required rate of return, \citeA{Battulga23b} used the maximum likelihood method and Kalman filtering. Reviews of some existing DDMs that include deterministic and stochastic models can be found in \citeA{dAmico20a} and \citeA{Battulga22a}. 

Existing stochastic DDMs have one common disadvantage: If dividend payments have chances to take negative values, then stock price of a firm can take negative values with a positive probability, which is an undesirable property for the stock price. A log version of the stochastic DDM, which is called by dynamic Gordon growth model was introduced by \citeA{Campbell88}, who derived a connection between log price, log dividend, and log return by approximation. Since their model is in a log framework, the stock price and dividend get positive values. For private companies, using the log private company valuation model, based on the dynamic Gordon growth model, \citeA{Battulga24c} developed closed--form pricing and hedging formulas for the European options and equity--linked life insurance products and valuation formula. In this paper, to obtain pricing and hedging formulas of European options and equity--linked life insurance products for public companies, we will augment the dynamic Gordon growth model by modeling dividend payments and spot interest rate.

Sudden and dramatic changes in the financial market and economy are caused by events such as wars, market panics, or significant changes in government policies. To model those events, some authors used regime--switching models. The regime--switching model was introduced by seminal works of \citeA{Hamilton89,Hamilton90} (see also books of \citeA{Hamilton94} and \citeA{Krolzig97}) and the model is hidden Markov model with dependencies, see \citeA{Zucchini16}. However, Markov regime--switching models have been introduced before Hamilton (1989), see, for example,  \citeA{Goldfeld73}, \citeA{Quandt58}, and \citeA{Tong83}. The regime--switching model assumes that a discrete unobservable Markov process generates switches among a finite set of regimes randomly and that each regime is defined by a particular parameter set. The model is a good fit for some financial data and has become popular in financial modeling including equity options, bond prices, and others. Recently, under the normal framework, \citeA{Battulga22b} obtained pricing and hedging formulas for the European options and equity--linked life insurance products by introducing a DDM with regime--switching process. Also, \citeA{Battulga24a} developed option pricing formulas for some frequently used options by using Markov--Switching Vector Autoregressive process. To model required rate of return on stock, \citeA{Battulga23b} applied two--regime model. The result of the paper reveals that the regime--switching model is good fit for the required rate of return.

In Section 2 of the paper, we develop stochastic DDM, which is known as the dynamic Gordon growth model using the \citeauthor{Campbell88}'s \citeyear{Campbell88} approximation method. Also, we obtain closed--form pricing formulas of the European call and put options in Section 3. Section 4 provides calculations of net single premiums of equity--linked life insurance products. Section 5 is dedicated to hedging formulas for the options and equity--linked life insurance products. In Section 6, we study ML estimators our model's parameters. In Section 6, we conclude the study. Finally, in Section 7 we provide Lemmas, which is used to the paper.

\section{Dynamic Gordon Growth Model}

Let $(\Omega,\mathcal{H}_T^x,\mathbb{P})$ be a complete probability space, where $\mathbb{P}$ is a given physical or real--world probability measure and $\mathcal{H}_T^x$ will be defined below. To introduce a regime--switching in dynamic Gordon growth model, we assume that $\{s_t\}_{t=1}^T$ is a homogeneous Markov chain with $N$ state and $\mathsf{P}:=\{p_{ij}\}_{i=0,j=1}^N$ is a random transition probability matrix, where $p_0:=(p_{01},\dots,p_{0N})$ is an initial probability vector. 

Dividend discount models (DDMs), first introduced by \citeA{Williams38}, are a popular tool for stock valuation. The basic idea of all DDMs is that the market price of a stock at time $t-1$ of the firm equals the sum of the market price of the stock at time $t$ and dividend payment at time $t$ discounted at risk--adjusted rate (required rate of return on stock). Let us assume there are $n$ companies. Therefore, for successive market values of stock of $i$--th company, the following relation holds 
\begin{equation}\label{06001}
P_{i,t}=(1+k_{i,t})P_{i,t-1}-d_{i,t},~~~t=1,\dots,T,
\end{equation}
where $k_{i,t}$ is the required rate of return on stock at regime $s_t$, $P_{i,t}$ is the market price of the stock, and $d_{i,t}$ is the dividend payment for investors, respectively, at time $t$ of $i$--th company. 

To keep notations simple, let $P_t:=(P_{1,t},\dots,P_{n,t})'$ be an $(n\times 1)$ vector of market prices of stocks, $k_t:=(k_{1,t},\dots,k_{n,t})'$ be an $(n\times 1)$ vector of required rate of returns on stocks, $d_t:=(d_{1,t},\dots,d_{n,t})'$ be an $(n\times 1)$ vector of dividend payments, respectively, at time $t$ of the companies, $I_n$ be an $(n\times n)$ identity matrix and $i_n:=(1,\dots,1)'$ be an $(n\times 1)$ vector, whose all elements equal one.

As mentioned above, if payments of dividends have chances to take negative values, then the stock prices of the companies can take negative values with positive probabilities, which is an undesirable property for the stock prices. That is why, we follow the idea in \citeA{Campbell88}. As a result, the stock prices of the companies take positive values. Following the idea in \citeA{Campbell88}, one can obtain the following approximation
\begin{equation}\label{06002}
\exp\{\tilde{k}_t\}=(P_t+d_t)\oslash P_{t-1}\approx \exp\Big\{\tilde{P}_t-\tilde{P}_{t-1}+\ln(g_t)+G_t^{-1}(G_t-I_{n})\big(\tilde{d}_t-\tilde{P}_t-\mu_t\big)\Big\},
\end{equation}
where $\oslash$ is a component--wise division of two vectors, $\tilde{k}_t:=\ln(i_n+k_t)$ is an ($n\times 1$) log required rate of return process, $\tilde{P}_t:=\ln(P_t)$ is an $(n\times 1)$ log stock price process, $\tilde{d}_t:=\ln(d_t)$ is an $(n\times 1)$ log dividend process, $\mu_t:=\mathbb{E}\big[\tilde{d}_t-\tilde{P}_t\big|\mathcal{F}_0\big]$ is an $(n\times 1)$ mean log dividend--to--price ratio process, respectively, at time $t$ of the companies and $\mathcal{F}_0$ is an initial information, which is defined below, $g_t:=i_{n}+\exp\{\mu_t\}$ is an $(n\times 1)$ linearization parameter, and $G_t:=\text{diag}\{g_t\}$ is an $(n\times n)$ diagonal matrix, whose diagonal elements are $g_t$. As a result, for the log stock price process at time $t$, the following approximation holds
\begin{equation}\label{06003}
\tilde{P}_t\approx G_t(\tilde{P}_{t-1}-\tilde{d}_t+\tilde{k}_t)+\tilde{d}_t-h_t.
\end{equation}
where $h_t:=G_t\big(\ln(g_t)-\mu_t\big)+\mu_t$ is a linearization parameter and the model is called by dynamic Gordon growth model, see \citeA{Campbell88}. For the quality of the approximation, we refer to \citeA{Campbell97}. Henceforth, the notation of approximation $(\approx)$ will be replaced by the notation of equality $(=)$. To estimate the parameters of the dynamic Gordon growth model and to price and hedge the Black--Scholes call and put options and equity--linked life insurance products, we suppose that the log required rate of return process at time $t$ is represented by the following equation 
\begin{equation}\label{06004}
\tilde{k}_t=C_{k,s_t}\psi_t+u_t,
\end{equation}
where $\psi_t:=(\psi_{1,t},\dots,\psi_{l,t})'$ is an $(l\times 1)$ vector, which consists of exogenous variables, $C_{k,s_t}$ is an $(n\times l)$ random coefficient matrix at regime $s_t$, and $u_t$ is an $(n\times 1)$ white noise process. In this case, equation \eqref{06003} becomes 
\begin{equation}\label{06005}
\tilde{P}_t=G_t(\tilde{P}_{t-1}-\tilde{d}_t+C_{k,s_t}\psi_t)+\tilde{d}_t-h_t+G_tu_t.
\end{equation}

We model the dividend process $d_t$ by the Gordon growth model. Therefore, successive log dividends are modeled by the following equation
\begin{equation}\label{06006}
\tilde{d}_t=C_{d,s_t}\psi_t+\tilde{d}_{t-1}+v_t,
\end{equation}
where $C_{d,s_t}$ is an $(n\times l)$ random coefficient matrix at regime $s_t$, and $v_t$ is a white noise process.

Finally, we model log spot interest rate $\tilde{r}_t$. Let $r_t$ be a spot interest rate for borrowing and lending over a period $(t,t+1]$. Then, the log spot interest rate is defined by $\tilde{r}_t:=\ln(1+r_t)$. By using the Dickey--Fuller test, it can be confirmed that the quarterly log spot interest rate is the unit--root process with drift, see data IRX of Yahoo Finance. Consequently, the log spot rate is modeled by the following equation
\begin{equation}\label{06007}
\tilde{r}_t=c_{r,s_t}'\psi_t+\tilde{r}_{t-1}+w_t,
\end{equation}
where $c_{r,s_t}$ is an $(l\times 1)$ random coefficient vector at regime $s_t$ and $w_t$ is a white noise process.

As a result, by combining equations \eqref{06005}--\eqref{06007}, we arrive the following system
\begin{equation}\label{06008}
\begin{cases}
\tilde{P}_t=\nu_{P,t}-(G_t-I_{n})\tilde{d}_t+G_t\tilde{P}_{t-1}+G_tu_t\\
a_t=\nu_{a,t}+a_{t-1}+\eta_t
\end{cases}~~~\text{for}~t=1,\dots,T
\end{equation}
under the real probability measure $\mathbb{P}$, where $a_t:=(\tilde{d}_t',\tilde{r}_t)'$ is an $([n+1]\times 1)$ process, which consists of the log dividend process and log spot interest rate process, $\eta_t:=(v_t',w_t)'$ is an ($[n+1]\times 1$) white noise process, $\nu_{P,t}:=G_tC_{k,s_t}\psi_t-h_t$ is an $(n\times 1)$ intercept process of log stock price process $\tilde{P}_t$ and $\nu_{a,t}:=C_{a,s_t}\psi_t$ is an $([n+1]\times 1)$ intercept process of the process $y_t$ with $C_{a,s_t}:=[C_{d,s_t}':c_{r,s_t}]'$. Let us denote a dimension of system \eqref{06008} by $\tilde{n}$, that is, $\tilde{n}:=2n+1$.

The stochastic properties of system \eqref{06011} is governed by the random vectors $\{u_1,\dots,u_T, \eta_1,\dots,\eta_T\}.$ We assume that for $t=1,\dots,T$, conditional on information $\mathcal{H}_0$, defined below, the white noise process $\xi_t:=(u_t',\eta_t')'$ independent and follows multivariate normal distribution, namely,
\begin{equation}\label{06009}
\xi_t~|~\mathcal{H}_0\sim \mathcal{N}(0,\Sigma_{s_t})
\end{equation}
under the real probability measure $\mathbb{P}$, where  
\begin{equation}\label{06010}
\Sigma_{s_t}=\begin{bmatrix}
\Sigma_{uu,s_t} & \Sigma_{u\eta,s_t}\\
\Sigma_{\eta u,s_t} & \Sigma_{\eta\eta,s_t}
\end{bmatrix}
\end{equation}
is covariance matrix of an $(\tilde{n}\times 1)$ white noise process $\xi_t$.

\section{Options Pricing}

\citeA{Black73} developed a closed--form formula for evaluating a European option. The formula assumes that the underlying asset follows geometric Brownian motion, but does not take dividends into account. Most stock options traded on the options exchange pay dividends at least once before they expire. Therefore, it is important to develop a formula for options on dividend--paying stocks from a practical point of view. \citeA{Merton73} first time used continuous dividend in the Black--Scholes framework and obtained a similar pricing formula with the Black--Scholes formula. However, if the dividend process does not proportional to the stock level, the Black--Scholes framework with dividends will collapse. In this paper, we develop an option pricing model, where the dividend process is modeled by the Gordon growth model. 

Let $T$ be a time to maturity of the European call and put options at time zero, $x_t:=\big(\tilde{P}_t',a_t'\big)'$ be $(\tilde{n}\times 1)$ process at time $t$ of endogenous variables, and $C_{s_t}:=\big[C_{k,s_t}':C_{a,s_t}'\big]'$ be a random coefficient matrix at regime $s_t$. We introduce stacked vectors and matrices: $x:=(x_1',\dots,x_T')'$, $s:=(s_1,\dots,s_T)'$, $C_s:=[C_{s_1}:\dots:C_{s_T}]$, and $\Sigma_s:=[\Sigma_{s_1}:\dots:\Sigma_{s_T}]$. We suppose that the white noise process $\{\xi_t\}_{t=1}^T$ is independent of the random coefficient matrix $C_s$, random covariance matrix $\Sigma_s$, random transition matrix $\mathsf{P}$, and regime--switching vector $s$ conditional on initial information $\mathcal{F}_0:=\sigma(x_0,\psi_{1},\dots,\psi_T)$. Here for a generic random vector $X$, $\sigma(X)$ denotes a $\sigma$--field generated by the random vector $X$, $\psi_1,\dots,\psi_T$ are values of exogenous variables and they are known at time zero. We further suppose that the transition probability matrix $\mathsf{P}$ is independent of the random coefficient matrix $C_s$ and covariance matrix $\Sigma_s$ given initial information $\mathcal{F}_0$ and regime--switching vector $s$.

To ease of notations, for a generic vector $o=(o_1',\dots,o_T')'$, we denote its first $t$ and last $T-t$ sub vectors by $\bar{o}_t$ and $\bar{o}_t^c$, respectively, that is, $\bar{o}_t:=(o_1',\dots,o_t')'$ and $\bar{o}_t^c:=(o_{t+1}',\dots,o_T')'$. We define $\sigma$--fields: for $t=0,\dots,T$, $\mathcal{F}_{t}:=\mathcal{F}_0\vee\sigma(\bar{x}_{t})$ and $\mathcal{H}_{t}:=\mathcal{F}_t\vee \sigma(C_s)\vee \sigma(\Sigma_s)\vee \sigma(\mathsf{P})\vee\sigma(s)$, where for generic sigma fields $\mathcal{O}_1,\dots,\mathcal{O}_k$, $\vee_{i=1}^k \mathcal{O}_i $ is the minimal $\sigma$--field containing the $\sigma$--fields $\mathcal{O}_i$, $i=1,\dots,k$. Observe that $\mathcal{F}_{t}\subset \mathcal{H}_{t}$ for $t=0,\dots,T$. 

For the first--order Markov chain, a conditional probability that the regime at time $t+1$, $s_{t+1}$ equals some particular value conditional on the past regimes $\bar{s}_t$, transition probability matrix $\mathsf{P}$, and initial information $\mathcal{F}_0$ depends only through the most recent regime at time $t$, $s_t$, transition probability matrix $\mathsf{P}$, and initial information $\mathcal{F}_0$, that is,
\begin{equation}\label{06011}
p_{s_ts_{t+1}}:=\mathbb{P}[s_{t+1}=s_{t+1}|s_t=s_t,\mathsf{P},\mathcal{F}_0]=\mathbb{P}\big[s_{t+1}=s_{t+1}|\bar{s}_t=\bar{s}_t,\mathsf{P},\mathcal{F}_0\big]
\end{equation} 
for $t=0,\dots,T-1$, where $p_{0s_1}=p_{s_0s_1}=\mathbb{P}[s_1=s_1|\mathsf{P},\mathcal{F}_0]$ is the initial probability. A distribution of a white noise vector $\xi:=(\xi_1',\dots,\xi_T')'$ is given by
\begin{equation}\label{06012}
\xi=(\xi_1',\dots,\xi_T')'~|~\mathcal{H}_0\sim \mathcal{N}(0,\Sigma_s),
\end{equation}
where $\Sigma_s:=\text{diag}\{\Sigma_{s_1},\dots,\Sigma_{s_T}\}$ is a block diagonal matrix. 

To remove duplicates in the random coefficient matrix $(C_s,\Sigma_s)$, for a generic regime--switching vector with length $k$, $o=(o_1,\dots,o_k)'$, we define sets
\begin{equation}\label{08006}
\mathcal{A}_{\bar{o}_t}:=\mathcal{A}_{\bar{o}_{t-1}}\cup\big\{o_t\in \{o_1,\dots,o_k\}\big|o_t\not \in \mathcal{A}_{\bar{o}_{t-1}}\big\},~~~t=1,\dots,k,
\end{equation}
where for $t=1,\dots,k$, $o_t\in \{1,\dots,N\}$ and an initial set is the empty set, i.e., $\mathcal{A}_{\bar{o}_0}=\O$. The final set $\mathcal{A}_o=\mathcal{A}_{\bar{o}_k}$ consists of different regimes in regime vector $o=\bar{o}_k$ and $|\mathcal{A}_o|$ represents a number of different regimes in the regime vector $o$. Let us assume that elements of sets $\mathcal{A}_s$, $\mathcal{A}_{\bar{s}_t}$, and difference sets between the sets $\mathcal{A}_{\bar{s}_t^c}$ and $\mathcal{A}_{\bar{s}_t}$ are given by $\mathcal{A}_s=\{\hat{s}_1,\dots,\hat{s}_{r_{\hat{s}}}\}$, $\mathcal{A}_{\bar{s}_t}=\{\alpha_1,\dots,\alpha_{r_\alpha}\}$, and $\mathcal{A}_{\bar{s}_t^c}\backslash \mathcal{A}_{\bar{s}_t}=\{\delta_1,\dots,\delta_{r_\delta}\}$, respectively, where $r_{\hat{s}}:=|\mathcal{A}_s|$, $r_\alpha:=|\mathcal{A}_{\bar{s}_t}|$, and $r_\delta:=|\mathcal{A}_{\bar{s}_t^c}\backslash \mathcal{A}_{\bar{s}_t}|$ are numbers of elements of the sets, respectively. We introduce the following regime vectors: $\hat{s}:=(\hat{s}_1,\dots,\hat{s}_{r_{\hat{s}}})'$ is an $(r_{\hat{s}}\times 1)$ vector, $\alpha:=(\alpha_1,\dots,\alpha_{r_\alpha})'$ is an $(r_\alpha\times 1)$ vector, and $\delta=(\delta_1,\dots,\delta_{r_\delta})'$ is an $(r_\delta\times 1)$ vector. For the regime vector $a=(a_1,\dots,a_{r_a})' \in\{\hat{s},\alpha,\delta\}$, we also introduce duplication removed random coefficient matrices, whose block matrices are different:  $C_a=[C_{a_1}:\dots:C_{a_{r_a}}]$ is an $(\tilde{n}\times [lr_a])$ matrix, $\Sigma_a=[\Sigma_{a_1}:\dots:\Sigma_{a_{r_a}}]$ is an $(\tilde{n}\times [\tilde{n}r_a])$ matrix, and $(C_a,\Sigma_a)$. 

We assume that for given duplication removed regime vector $\hat{s}$ and initial information $\mathcal{F}_0$, the coefficient matrices $(C_{\hat{s}_1},\Sigma_{\hat{s}_1}),\dots,(C_{\hat{s}_{r_{\hat{s}}}},\Sigma_{\hat{s}_{r_{\hat{s}}}})$ are independent under the real probability measure $\mathbb{P}$. Under the assumption, conditional on $\hat{s}$ and $\mathcal{F}_0$, a joint density function of the random coefficient random matrix $(C_{\hat{s}},\Sigma_{\hat{s}})$ is represented by
\begin{equation}\label{08010}
f\big(C_{\hat{s}},\Sigma_{\hat{s}}\big|\hat{s},\mathcal{F}_0\big)=\prod_{t=1}^{r_{\hat{s}}}f\big(C_{\hat{s}_t},\Sigma_{\hat{s}_t}\big|\hat{s}_t,\mathcal{F}_0\big)
\end{equation}
under the real probability measure $\mathbb{P}$, where for a generic random vector $X$, we denote its density function by $f(X)$ under the real probability measure $\mathbb{P}$. Using the regime vectors $\alpha$ and $\delta$, the above joint density function can be written by 
\begin{equation}\label{08011}
f\big(C_{\hat{s}},\Sigma_{\hat{s}}\big|\hat{s},\mathcal{F}_0\big)=
f\big(C_{\alpha},\Sigma_{\alpha}\big|\alpha,\mathcal{F}_0\big)f_*\big(C_{\delta},\Sigma_{\delta}\big|\delta,\mathcal{F}_0\big)
\end{equation}
where the density function $f_*\big(C_{\delta},\Sigma_{\delta}\big|\delta,\mathcal{F}_0\big)$ equals
\begin{equation}\label{08012}
f_*\big(C_\delta,\Sigma_\delta\big|\delta,\mathcal{F}_0\big):=
\begin{cases}
f\big(C_\delta,\Sigma_\delta\big|\delta,\mathcal{F}_0\big),& \text{if}~~~r_\delta\neq 0,\\
1,& \text{if}~~~r_\delta= 0.
\end{cases}
\end{equation}

\subsection{Risk--Neutral Probability Measure}

To price the European call and put options and equity--linked life insurance products, we need to change from the real probability measure to some risk--neutral measure. Let $D_t:=\exp\{-\tilde{r}_1-\dots-\tilde{r}_t\}=1\big/\prod_{s=1}^t(1+r_s)$ be a predictable discount process, where $\tilde{r}_t$ is the log spot interest rate at time $t$. According to \citeA{Pliska97} (see also \citeA{Bjork20}), for all companies, a conditional expectations of the return processes $k_{i,t}=(P_{i,t}+d_{i,t})/P_{i,t-1}-1$ for $i=1,\dots,n$ must equal the spot interest rate $r_t$ under some risk--neutral probability measure $\tilde{\mathbb{P}}$ and a filtration $\{\mathcal{H}_t\}_{t=0}^T$. Thus, it must hold
\begin{equation}\label{06013}
\tilde{\mathbb{E}}\big[(P_t+d_t)\oslash P_{t-1}\big|\mathcal{H}_{t-1}\big]=\exp\big\{\tilde{r}_ti_{n}\big\}
\end{equation}
for $t=1,\dots,T$, where $\tilde{\mathbb{E}}$ denotes an expectation under the risk--neutral probability measure $\mathbb{\tilde{P}}$. According to equation \eqref{06002}, condition \eqref{06013} is equivalent to the following condition
\begin{equation}\label{06014}
\tilde{\mathbb{E}}\big[\exp\big\{u_t-\big(\tilde{r}_ti_{n}-C_{k,s_t}\psi_t\big)\big\}\big|\mathcal{H}_{t-1}\big]=i_{n}.
\end{equation}

It should be noted that condition \eqref{06014} corresponds only to the white noise random process $u_t$. Thus, we need to impose a condition on the white noise random process $\eta_t$ under the risk--neutral probability measure. This condition is fulfilled by $\tilde{\mathbb{E}}[f(\eta_t)|\mathcal{H}_{t-1}]=\tilde{\theta}_t$ for any Borel function $f:\mathbb{R}^{n+1}\to \mathbb{R}^{n+1}$ and $\mathcal{H}_{t-1}$ measurable any ($[n+1]\times 1$) random vector $\tilde{\theta}_t$. Because for any admissible choices of $\tilde{\theta}_t$, condition \eqref{06014} holds, the market is incomplete. But prices of the options are still consistent with the absence of arbitrage. For this reason, to price the options and life insurance products, in this paper, we will use a unique optimal Girsanov kernel process $\theta_t$, which minimizes the variance of a state price density process and relative entropy. According to \citeA{Battulga23a}, the optimal kernel process $\theta_t$ is obtained by
\begin{equation}\label{06015}
\theta_t=\Theta_t \bigg(\tilde{r}_ti_n-C_{k,s_t}\psi_t-\frac{1}{2}\mathcal{D}[\Sigma_{uu,s_t}]\bigg),
\end{equation}
where $\Theta_t=\big[G_t:(\Sigma_{\eta u,s_t}\Sigma_{uu,s_t}^{-1})'\big]'$ and for a generic square matrix $O$, $\mathcal{D}[O]$ denotes a vector, consisting of diagonal elements of the matrix $O$. Consequently, system \eqref{06008} can be written by
\begin{equation}\label{06016}
\begin{cases}
\tilde{P}_t=\tilde{\nu}_{P,t}-(G_t-I_n)\tilde{d}_t+G_t\tilde{P}_{t-1}+G_ti_{n}j_r' a_{t-1}+G_t\tilde{u}_t\\
a_t=\tilde{\nu}_{a,t}+\big(I_{n+1}+\Sigma_{\eta u,s_t}\Sigma_{uu,s_t}^{-1}i_nj_r'\big)a_{t-1}+\tilde{\eta}_t,
\end{cases}~~~\text{for}~t=1,\dots,T
\end{equation}
under the risk--neutral probability measure $\tilde{\mathbb{P}}$, where $j_r:=(0,1)'$ is an $([n+1]\times 1)$ vector, which is used to extract the log spot rate process $\tilde{r}_t$ from the process $a_t$, i.e., $\tilde{r}_t=j_r'a_t$, $\tilde{\nu}_{P,t}:=-\frac{1}{2}G_t\mathcal{D}[\Sigma_{uu,s_t}]-h_t$ is an $(n\times 1)$ intercept process of the log stock price process $\tilde{P}_t$ and $\tilde{\nu}_{a,t}:=C_{a,s_t}\psi_t-\Sigma_{\eta u,s_t}\Sigma_{uu,s_t}^{-1}\big(C_{k,s_t}\psi_t+\frac{1}{2}\mathcal{D}[\Sigma_{uu,s_t}]\big)$ is an $([n+1]\times 1)$ intercept process process $a_t$. It is worth mentioning that  a joint distribution of a random vector $\tilde{\xi}:=(\tilde{\xi}_1',\dots,\tilde{\xi}_T')'$ with $\tilde{\xi}_t:=(\tilde{u}_t',\tilde{\eta}_t')'$ equals the joint distribution of the random vector $\xi=(\xi_1',\dots,\xi_T')'$, that is,
\begin{equation}\label{06017}
\tilde{\xi}~|~\mathcal{H}_0\sim \mathcal{N}\big(0,\Sigma_s\big)
\end{equation}
under the risk--neutral probability measure $\mathbb{\tilde{P}}$, see \citeA{Battulga23a}.

System \eqref{06016} can be written in VAR(1) form, namely
\begin{equation}\label{06018}
Q_{0,t}x_t=\tilde{\nu}_t+Q_{1,t}x_{t-1}+\mathsf{G}_t\tilde{\xi}_t
\end{equation} 
under the risk--neutral probability measure $\mathbb{\tilde{P}}$, where $\tilde{\nu}_t:=(\tilde{\nu}_{P,t}',\tilde{\nu}_{a,t}')'$, and $\tilde{\xi}_t:=\big(\tilde{u}_t',\tilde{\eta}_t'\big)'$ are intercept process and white noise processes of the VAR(1) process $x_t$, respectively, and
\begin{equation}\label{06019}
Q_{0,t}:=\begin{bmatrix}
I_n & H_t \\
0 & I_{n+1}
\end{bmatrix},~~~Q_{1,t}:=\begin{bmatrix}
G_t & G_ti_nj_r' \\
0 & E_t
\end{bmatrix},~~~\text{and}~~~
\mathsf{G}_t=\begin{bmatrix}
G_t & 0\\
0 & I_{n+1}
\end{bmatrix}
\end{equation}
are $(\tilde{n}\times \tilde{n})$ coefficient matrices, $H_t:=[G_t-I_n:0]$ is an $(n\times [n+1])$ matrix, and $E_t:=I_{n+1}+\Sigma_{\eta u,s_t}\Sigma_{uu,s_t}^{-1}i_nj_r'$ is an $([n+1]\times [n+1])$ matrix. By repeating equation \eqref{06018}, one gets that for $i=t+1,\dots,T$,
\begin{equation}\label{06020}
x_i=\Pi_{t,i}x_t+\sum_{\beta=t+1}^i\Pi_{\beta,i}\tilde{\nu}_\beta+\sum_{\beta=t+1}^i\Pi_{\beta,i}\mathsf{G}_\beta\tilde{\xi}_\beta,
\end{equation}
where the coefficient matrices are for $\beta=t$,
\begin{equation}\label{06021}
\Pi_{\beta,i}:=\prod_{\alpha=\beta+1}^iQ_{0,\alpha}^{-1}Q_{1,\alpha}=\begin{bmatrix}
\displaystyle\prod_{\alpha=\beta+1}^iG_\alpha & \displaystyle \sum_{\alpha=\beta+1}^i\Bigg(\prod_{j_1=\alpha+1}^iG_{j_1}\Bigg)\Psi_\alpha\Bigg(\prod_{j_2=\beta+1}^{\alpha-1} E_{j_2}\Bigg)\\
\displaystyle 0 & \displaystyle\prod_{\alpha=\beta+1}^i E_\alpha
\end{bmatrix}
\end{equation}
with $\Psi_\alpha:=G_\alpha i_nj_r'-H_\alpha E_\alpha$, for $\beta=t+1,\dots,i-1$,
\begin{eqnarray}\label{•}
\Pi_{\beta,i}&:=&\Bigg(\prod_{\alpha=\beta+1}^iQ_{0,\alpha}^{-1}Q_{1,\alpha}\Bigg)Q_{0,\beta}^{-1}\\
&=&\begin{bmatrix}
\displaystyle\prod_{\alpha=\beta+1}^iG_\alpha & \displaystyle \sum_{\alpha=\beta+1}^i\Bigg(\prod_{j_1=\alpha+1}^iG_{j_1}\Bigg)\Psi_\alpha\Bigg(\prod_{j_2=\beta+1}^{\alpha-1} E_{j_2}\Bigg)-\Bigg(\prod_{\alpha=\beta+1}^iG_\alpha\Bigg)H_\beta\\
\displaystyle 0 & \displaystyle\prod_{\alpha=\beta+1}^i E_\alpha
\end{bmatrix},
\end{eqnarray}
and for $\beta=i$,
\begin{equation}\label{06124}
\Pi_{\beta,i}:=Q_{0,\beta}^{-1}=\begin{bmatrix}
I_n & -H_\beta\\
0 & I_{n+1}
\end{bmatrix}.
\end{equation}
Here for a sequence of generic $(k\times k)$ square matrices $O_1,O_2,\dots$, the products mean that for $v\leq u$, $\prod_{j=v}^uO_j=O_u\dots O_v$ and for $v>u$, $\prod_{j=v}^uO_j=I_k$.

Therefore, conditional on the information $\mathcal{H}_t$, for $i=t+1,\dots,T$, a expectation at time $i$ and a conditional covariance matrix at times $i_1$ and $i_2$ of the process $x_t$ is given by the following equations
\begin{equation}\label{06022}
\tilde{\mu}_{i|t}:=\mathbb{\tilde{E}}\big[x_i\big|\mathcal{H}_t\big]=\Pi_{t,i} x_t+\sum_{\beta=t+1}^i\Pi_{\beta,i}\tilde{\nu}_\beta
\end{equation}
and
\begin{equation}\label{06023}
\Sigma_{i_1,i_2|t}:=\widetilde{\text{Cov}}\big[x_{i_1},x_{i_2}\big|\mathcal{H}_t\big]=\sum_{\beta=t+1}^{i_1\wedge i_2}\Pi_{\beta,i_1}\mathsf{G}_\beta\Sigma{_{s_{\beta}}}\mathsf{G}_\beta\Pi_{\beta,i_2}',
\end{equation}
where $i_1\wedge i_2$ is a minimum of $i_1$ and $i_2$. It should be noted that the expectation $\tilde{\mu}_{i|t}$ and covariance matrix $\Sigma_{i_1,i_2|t}$ are depend on the information $\mathcal{H}_t$. Consequently, due to equation \eqref{06020}, conditional on the information $\mathcal{H}_t$, a joint distribution of the random vector $\bar{x}_t^c$ is
\begin{equation}\label{06024}
\bar{x}_t^c~|~\mathcal{H}_t\sim \mathcal{N}\big(\tilde{\mu}_t^c,\Sigma_t^c\big),~~~t=0,\dots,T-1
\end{equation}
under the risk--neutral probability measure $\tilde{\mathbb{P}}$, where $\tilde{\mu}_t^c:=\big(\tilde{\mu}_{t+1|t}',\dots,\tilde{\mu}_{T|t}'\big)'$ is a conditional expectation and $\Sigma_t^c:=\big(\Sigma_{i_1,i_2|t}\big)_{i_1,i_2=t+1}^T$ is a conditional covariance matrix of a random vector $\bar{x}_t^c:=(x_{t+1}',\dots,x_T')'$ and are calculated by equations \eqref{06022} and \eqref{06023}, respectively. 

\subsection{Forward Probability Measure}

According to \citeA{Geman95}, cleaver change of probability measure leads to a significant reduction in the computational burden of derivative pricing. The frequently used probability measure that reduces the computational burden is the forward probability measure and to price the zero--coupon bond, European options, equity--linked life insurance products, and hedging we will apply it. To define the forward probability measure, we need zero--coupon bond. It is the well--known fact that conditional on $\mathcal{H}_t$, price at time $t$ of zero--coupon bond paying face value 1 at time $u$ is $B_{t,u}(\mathcal{H}_t):=\frac{1}{D_t}\mathbb{\tilde{E}}\big[D_u\big|\mathcal{H}_t\big]$. The $(t,u)$--forward probability measure is defined by
\begin{equation}\label{06025}
\mathbb{\hat{P}}_{t,u}\big[A\big|\mathcal{H}_t\big]:=\frac{1}{D_tB_{t,u}(\mathcal{H}_t)}\int_AD_u\mathbb{\tilde{P}}\big[\omega|\mathcal{H}_t\big]~~~\text{for all}~A\in \mathcal{H}_T.
\end{equation}
Therefore, for $u>t$, a negative exponent of $D_u/D_t$ in the zero--coupon bond formula is represented by 
\begin{equation}\label{06026}
\sum_{\beta=t+1}^u\tilde{r}_\beta=\tilde{r}_{t+1}+j_r'J_{a}\Bigg[\sum_{\beta=t+1}^{u-1}J_{\beta|t}\Bigg]\bar{x}_t^c=\tilde{r}_{t+1}+\gamma_{t,u}'\bar{x}_t^c
\end{equation}
where $J_{a}:=[0:I_{n+1}]$ is an $([n+1]\times \tilde{n})$ matrix, whose second block matrix equals $I_{n+1}$ and first block is zero and it can be used to extract the random process $a_s$ from the random process $x_s$, $J_{\beta|t}:=[0:I_{\tilde{n}}:0]$ is an $(\tilde{n}\times \tilde{n}(T-t))$ matrix, whose $(\beta-t)$--th block matrix equals $I_{\tilde{n}}$ and others are zero and it is used to extract the random vector $x_\beta$ from the random vector $\bar{x}_t^c$, and $\gamma_{t,u}':=j_r'J_y\sum_{\beta=t+1}^{u-1}J_{\beta|t}$. Therefore, two times of negative exponent of the price at time $t$ of the zero--coupon bond $B_{t,u}(\mathcal{H}_t)$ is represented by
\begin{eqnarray}\label{06027}
&&2\sum_{s=t+1}^u\tilde{r}_s+\big(\bar{x}_t^c-\tilde{\mu}_t^c\big)'\big(\Sigma_t^c\big)^{-1}\big(\bar{x}_t^c-\tilde{\mu}_t^c\big)\nonumber\\
&&=\Big(\bar{x}_t^c-\tilde{\mu}_t^c+\Sigma_t^c\gamma_{t,u}\Big)'\big(\Sigma_t^c\big)^{-1}\Big(\bar{x}_t^c-\tilde{\mu}_t^c+\Sigma_t^c\gamma_{t,u}\Big)\\
&&+2\big(\tilde{r}_{t+1}+\gamma_{t,u}'\tilde{\mu}_t^c\big)-\gamma_{t,u}'\Sigma_t^c\gamma_{t,u}.\nonumber
\end{eqnarray}
As a result, for given $\mathcal{H}_t$, price at time $t$ of the zero--coupon $B_{t,u}$ is
\begin{equation}\label{06028}
B_{t,u}(\mathcal{H}_t)=\exp\bigg\{-\tilde{r}_{t+1}-\gamma_{t,u}'\tilde{\mu}_t^c+\frac{1}{2}\gamma_{t,u}'\Sigma_t^c\gamma_{t,u}\bigg\}.
\end{equation}
Consequently, conditional on the information $\mathcal{H}_t$, a joint distribution of the random vector $\bar{x}_t^c$ is given by
\begin{equation}\label{06029}
\bar{x}_t^c~|~\mathcal{H}_t\sim \mathcal{N}\big(\hat{\mu}_{t,u}^c,\Sigma_t^c\big),~~~t=0,\dots,T-1
\end{equation}
under the $(t,u)$--forward probability measure $\hat{\mathbb{P}}_{t,u}$, where $\hat{\mu}_{t,u}^c:=\tilde{\mu}_t^c-\Sigma_t^c\gamma_{t,u}$ and $\Sigma_t^c$ are conditional expectation and conditional covariance matrix, respectively, of the random vector $\bar{x}_t^c$. Also, as  $J_{s_1|t}\Sigma_t^c J_{s_2|t}'=\Sigma_{s_1,s_2|t}$, we have
\begin{equation}\label{06030}
J_{s|t}\Sigma_t^c\Bigg(\sum_{\beta=t+1}^{u-1}J_{\beta|t}'\Bigg)=\sum_{\beta=t+1}^{u-1}\Sigma_{s,\beta|t},
\end{equation}
where $\Sigma_{s,\beta|t}$ is calculated by equation \eqref{06023}. Therefore, for $s=t+1,\dots,T$, $(s-t)$--th block vector of the conditional expectation $\hat{\mu}_{t,u}^c$ is given by
\begin{equation}\label{06031}
\hat{\mu}_{s|t,u}:=J_{s|t}\hat{\mu}_{t,u}^c=\tilde{\mu}_{s|t}-\sum_{\beta=t+1}^{u-1}\big(\Sigma_{s,\beta|t}\big)_{\tilde{n}},
\end{equation}
where for a generic matrix $O$, we denote its $j$--th column by $(O)_j$. Similarly, the price at time $t$ of the zero--coupon bond is given by
\begin{equation}\label{06034}
B_{t,u}=\exp\bigg\{-\tilde{r}_{t+1}-\sum_{\beta=t+1}^{u-1}\big(\tilde{\mu}_{\beta|t}\big)_{\tilde{n}}+\frac{1}{2}\sum_{\alpha=t+1}^{u-1}\sum_{\beta=t+1}^{u-1}\big(\Sigma_{\alpha,\beta|t}\big)_{\tilde{n},\tilde{n}}\bigg\}.
\end{equation}
where for a generic vector $o$, we denote its $j$--th element by $(o)_j$, and for a generic square matrix $O$, we denote its $(i,j)$--th element by $(O)_{i,j}$. According to equations \eqref{06020} and \eqref{06031}, we have that
\begin{equation}\label{ad002}
x_i\overset{d}{=}\Pi_{t,i}x_t+\sum_{\beta=t+1}^i\Pi_{\beta,i}\tilde{\nu}_\beta-\sum_{\beta=t+1}^{u-1}\big(\Sigma_{s,\beta|t}\big)_{\tilde{n}}+\sum_{\beta=t+1}^i\Pi_{\beta,i}\mathsf{G}_\beta\hat{\xi}_\beta,
\end{equation}
under the $(t,u)$--forward probability measure $\hat{\mathbb{P}}_{t,u}$, where $\hat{\xi}:=(\hat{\xi}_1',\dots,\hat{\xi}_T')|\mathcal{H}_0\sim \mathcal{N}(0,\Sigma_s)$.
On the other hand, by equation \eqref{06023}, it can be shown that
\begin{equation}\label{03091}
\sum_{\beta=t+1}^{u-1}\big(\Sigma_{i,\beta|t}\big)_{\tilde{n}}=\sum_{\beta=t+1}^i\Pi_{\beta,i}\mathsf{G}_\beta\hat{c}_{\beta|t,u},
\end{equation}
where $\hat{c}_{\beta|t,u}:=\sum_{\alpha=t+1}^{u-1}\big(\Sigma_{s_\beta}\mathsf{G}_\beta\Pi_{\beta,\alpha}'\big)_{\tilde{n}}$ is an $(\tilde{n}\times 1)$ vector. Therefore, we have that
\begin{eqnarray}\label{ad003}
x_i\overset{d}{=}\Pi_{t,i}x_t+\sum_{\beta=t+1}^i\Pi_{\beta,i}\Big(\tilde{\nu}_\beta-\mathsf{G}_\beta\hat{c}_{\beta|t,u}\Big)+\sum_{\beta=t+1}^i\Pi_{\beta,i}\mathsf{G}_\beta\hat{\xi}_\beta
\end{eqnarray}
under the $(t,u)$--forward probability measure $\hat{\mathbb{P}}_{t,u}$. As a result, by comparing equation \eqref{06020}, corresponding to system \eqref{06016} and equation \eqref{ad003}, one can conclude that the log price process $\tilde{P}_t$ is given by
\begin{equation}\label{03093}
\tilde{P}_t=G_t\bigg(\tilde{P}_{t-1}-\tilde{d}_t+\tilde{r}_t i_n-\frac{1}{2}\mathcal{D}[\Sigma_{uu,s_t}]-J_P\hat{c}_{t|t,u}\bigg)+\tilde{d}_t-h_t+G_t\hat{u}_t
\end{equation}
and system \eqref{06016} becomes 
\begin{equation}\label{03094}
\begin{cases}
\tilde{P}_t=\hat{\nu}_{P,t}-(G_t-I_n)\tilde{d}_t+G_t\tilde{P}_{t-1}+G_ti_{n}j_r' a_{t-1}+G_t\hat{u}_t\\
a_t=\hat{\nu}_{a,t}+\big(I_{n+1}+\Sigma_{\eta u,s_t}\Sigma_{uu,s_t}^{-1}i_nj_r'\big)a_{t-1}+\hat{\eta}_t,
\end{cases}~~~\text{for}~t=1,\dots,T
\end{equation}
under the $(t,u)$--forward probability measure $\mathbb{\hat{P}}_{t,u}$, where $\hat{\nu}_{P,t}:=\tilde{\nu}_{P,t}-G_tJ_P\hat{c}_{t|t,u}$ is an $(n\times 1)$ intercept process of the log price process $\tilde{P}_t$ and $\hat{\nu}_{a,t}:=\tilde{\nu}_{a,t}-J_{a}\hat{c}_{t|t,u}$ is an $([n+1]\times 1)$ intercept process of the process $a_t$. 

To price the European call and put options and equity--linked life insurance products, we need a distribution of the log stock price process at time $k$ for $k=t+1,\dots,T$ under the $(t,u)$--forward probability measure $\hat{\mathbb{P}}_{t,u}$. For this reason, it follows from equation \eqref{06029} that the distribution of the log stock price process at time $T$ is given by 
\begin{equation}\label{06035}
\tilde{P}_k~|~\mathcal{H}_t\sim \mathcal{N}\Big(\hat{\mu}_{k|t,u}^{\tilde{P}},\Sigma_{k|t}^{\tilde{P}}\Big)
\end{equation}
for $k=t+1,\dots,T$ under the $(t,u)$--forward probability measure $\mathbb{\hat{P}}_{t,u}$, where $\hat{\mu}_{k|t,u}^{\tilde{P}}:=J_P\hat{\mu}_{k|t,u}$ is a conditional expectation, which is calculated from equation \eqref{06031} and $\Sigma_{k|t}^{\tilde{P}}:=J_P\Sigma_{k,k|t}J_P'$ is a conditional covariance matrix, which is calculated from equation \eqref{06023} of the log stock price at time $k$ given the information $\mathcal{H}_t$ and $J_P:=[I_{n}:0]$ is an  ($n\times \tilde{n}$) matrix, which is used to extract the log stock price process $\tilde{P}_t$ from the process $x_t$.

Therefore, according to equation \eqref{06035} and Lemma \ref{lem01}, see Technical Annex, for $i=1,\dots,n$, conditional on the information $\mathcal{H}_t$, price vectors at time $t$ of the Black--Sholes call and put options with strike price vector $K$ and maturity $T$ is given by
\begin{eqnarray}\label{06036}
C_{T|t}(\mathcal{H}_t)&=&\mathbb{\tilde{E}}\bigg[\frac{D_T}{D_t}\Big(P_{T}-K\Big)^+\bigg|\mathcal{H}_t\bigg]=B_{t,T}(\mathcal{H}_t)\mathbb{\hat{E}}_{t,T}\Big[\big(P_{T}-K\big)^+\Big|\mathcal{H}_t\Big]\\
&=&B_{t,T}(\mathcal{H}_t)\bigg(\exp\bigg\{\hat{\mu}_{T|t,T}^{\tilde{P}}+\frac{1}{2}\mathcal{D}\big[\Sigma_{T|t}^{\tilde{P}}\big]\bigg\}\odot\Phi\big(d_{T|t}^1\big)-K\odot\Phi\big(d_{T|t}^2\big)\bigg)\nonumber
\end{eqnarray}
and
\begin{eqnarray}\label{06037}
P_{T|t}(\mathcal{H}_t)&=&\mathbb{\tilde{E}}\bigg[\frac{D_T}{D_t}\Big(K-P_T\Big)^+\bigg|\mathcal{H}_t\bigg]=B_{t,T}(\mathcal{H}_t)\mathbb{\hat{E}}_{t,T}\Big[\big(K-P_{T}\big)^+\Big|\mathcal{H}_t\Big]\\
&=&B_{t,T}(\mathcal{H}_t)\bigg(K\odot\Phi\big(-d_{T|t}^2\big)-\exp\bigg\{\hat{\mu}_{T|t,T}^{\tilde{P}}+\frac{1}{2}\mathcal{D}\big[\Sigma_{T|t}^{\tilde{P}}\big]\bigg\}\odot\Phi\big(-d_{T|t}^1\big)\bigg),\nonumber
\end{eqnarray}
where $\mathbb{\hat{E}}_{t,u}$ is an expectation under the ($t,u$)--forward probability measure $\mathbb{\hat{P}}_{t,u}$, $\odot$ is the Hadamard product of two vectors, $d_{T|t}^1:=\Big(\hat{\mu}_{T|t,T}^{\tilde{P}}+\mathcal{D}\big[\Sigma_{T|t}^{\tilde{P}}\big]-\ln(K)\Big)\oslash\sqrt{\mathcal{D}\big[\Sigma_{T|t}^{\tilde{P}}\big]}$ is an $(n\times 1)$ vector and same dimension holds for the vector $d_{T|t}^2:=d_{T|t}^1-\sqrt{\mathcal{D}\big[\Sigma_{T|t}^{\tilde{P}}\big]}$. Consequently, by the tower property of conditional expectation, Lemma \ref{lem02}, and equations \eqref{06036} and \eqref{06037}, a price vector at time $t$ of the Black--Sholes call and put options with strike price vector $K$ and maturity $T$ is obtained by
\begin{eqnarray}\label{06038}
C_{T|t}=\sum_{s}\int_{C_{\hat{s}},\Sigma_{\hat{s}}}C_{T|t}(\mathcal{H}_t)\tilde{f}(C_{\hat{s}},\Sigma_{\hat{s}},s|\mathcal{F}_t)dC_{\hat{s}} d\Sigma_{\hat{s}}
\end{eqnarray}
and
\begin{eqnarray}\label{06039}
P_{T|t}=\sum_{s}\int_{C_{\hat{s}},\Sigma_{\hat{s}}}P_{T|t}(\mathcal{H}_t)\tilde{f}(C_{\hat{s}},\Sigma_{\hat{s}},s|\mathcal{F}_t)dC_{\hat{s}} d\Sigma_{\hat{s}}.
\end{eqnarray}

\section{Life Insurance Products}

Now we consider the pricing of some equity--linked life insurance products using the risk--neutral measure. Here we will price segregated funds contract with guarantees, see \citeA{Hardy01} and unit--linked life insurances with guarantees, see \citeA{Aase94} and \citeA{Moller98}. For discrete--time life insurance products, which cover both of the equity--linked life insurance products, we refer to recent work of \citeA{Battulga24f}.  We suppose that the stocks represent some funds and an insured receives dividends from the funds. Let $T_x$ be $x$ aged insured's future lifetime random variable, $\mathcal{T}_t^x=\sigma(1_{\{T_x> s\}}:s\in[0,t])$ be $\sigma$--field, which is generated by a death indicator process $1_{\{T_x\leq t\}}$, $F_t$ be an $(n\times 1)$ vector of units of the funds, and $G_t$ be a $(n\times 1)$ vector of amounts of the guarantees, respectively, at time $t$, where for a generic event $A\in\mathcal{H}_t^x$, $1_A$ is an indicator random variable of the event $A$. We assume that the $\sigma$--fields $\mathcal{H}_T$ and $\mathcal{T}_T^x$ are independent, and operational expenses, which are deducted from the funds and withdrawals are omitted from the life insurance products. A common life insurance product in practice is endowment insurance, and combinations of term life insurance and pure endowment insurance lead to various endowment insurances, see \citeA{Aase94}. Thus, it is sufficient to consider only the term life insurance and the pure endowment insurance.  

A $T$--year pure endowment insurance provides payment of a sum insured at the end of the $T$ years only if the insured is alive at the end of $T$ years from the time of policy issue. For the pure endowment insurance, we assume that the sum insured is forming $f(P_T)$ for some Borel function $f: \mathbb{R}_+^n\to \mathbb{R}_+^n$, where $\mathbb{R}_+^n:=\{x\in\mathbb{R}^n|x>0\}$ is the set of $(n\times 1)$ positive real vectors. In this case, the sum insured depends on the random stock price at time $T$, and the form of the function $f$ depends on an insurance contract. Choices of $f$ give us different types of life insurance products. For example, for $x,K\in\mathbb{R}_+^n$, $f(x)=i_n$, $f(x)=x$, $f(x)=\max\{x,K\}=[x-K]^++K$, and $f(x)=[K-x]^+$ correspond to simple life insurance, pure unit--linked, unit--linked with guarantee, and segregated fund contract with guarantee, respectively, see \citeA{Aase94}, \citeA{Bowers97}, and \citeA{Hardy01}. As a result, a discounted contingent claim of the $T$--year pure endowment insurance can be represented by the following equation
\begin{equation}\label{06046}
\overline{H}_T:=D_Tf(P_T)1_{\{T_x>T\}}.
\end{equation}
To price the contingent claim we define $\sigma$--fields: for each $t=1,\dots,T$, $\mathcal{H}_t^x:=\mathcal{H}_t\vee \mathcal{T}_t^x$ is a minimal $\sigma$--field that contains the $\sigma$--fields $\mathcal{H}_t$ and $\mathcal{T}_t^x$. Since the $\sigma$--fields $\mathcal{H}_T$ and $\mathcal{T}_T^x$ are independent, one can deduce that value at time $t$ of a contingent claim $f(P_T)1_{\{T_x>T\}}$ is given by
\begin{equation}\label{06047}
V_t(\mathcal{H}_t)=\frac{1}{D_t}\mathbb{\tilde{E}}[\overline{H}_T|\mathcal{H}_t^x]=\frac{1}{D_t}\mathbb{\tilde{E}}[D_Tf(P_T)|\mathcal{H}_t]{}_{T-t}p_{x+t},
\end{equation}
where $_tp_x:=\mathbb{P}[T_x>t]$ represents the probability that $x$--aged insured will attain age $x+t$. 

A $T$--year term life insurance is an insurance that provides payment of a sum insured only if death occurs in $T$ years. In contrast to pure endowment insurance, the term life insurance's sum insured depends on time $t$, that is, its sum insured form is $f(P_t)$ because random death occurs at any time in $T$ years. Therefore, a discounted contingent claim of the $T$--term life insurance is given by
\begin{equation}\label{06048}
\overline{H}_T:=D_{K_x+1}f(P_{K_x+1})1_{\{K_x+1\leq T\}}=\sum_{k=0}^{T-1}D_{k+1}f(P_{t+k})1_{\{K_x=k\}},
\end{equation}
where $K_x:=[T_x]$ is the curtate future lifetime random variable of life--aged--$x$. For the contingent claim of the term life insurance, providing a benefit at the end of the year of death, it follows from the fact that $\mathcal{H}_T$ and $\mathcal{T}_T^x$ are independent that a value process at time $t$ of the term insurance is
\begin{equation}\label{06049}
V_t(\mathcal{H}_t)=\frac{1}{D_t}\mathbb{\tilde{E}}[\overline{H}_T|\mathcal{H}_t^x]=\sum_{k=t}^{T-1}\frac{1}{D_t}\mathbb{\tilde{E}}[D_{k+1}f(P_{k+1})|\mathcal{H}_t]{}_{k-t}p_{x+t}q_{x+k}.
\end{equation}
where $_tq_x:=\mathbb{P}[T_x\leq t]$ represents the probability that $x$--aged insured will die within $t$ years.

For the $T$--year term life insurance and $T$--year pure endowment insurance both of which correspond to the segregated fund contract, observe that the sum insured forms are $f(P_k)=F_k\odot\big[L_k-P_k\big]^+$ for $k=1,\dots,T$, where $L_k:=G_k\oslash F_k$. On the other hand, the sum insured forms of the unit--linked life insurance are $f(P_k)=F_k\odot\big[P_k-L_k\big]^++G_k$ for $k=1,\dots,T$. Therefore, from the structure of the sum insureds of the segregated funds and the unit--linked life insurances, one can conclude that to price the life insurance products it is sufficient to consider European call and put options with strike price $L_k$ and maturity $k$ for $k=t+1,\dots,T$. 

Similarly to equations \eqref{06036} and \eqref{06037}, one can obtain that for $k=t+1,\dots,T$,

\begin{eqnarray}\label{06051}
&&C_{k|t}(\mathcal{H}_t)=\mathbb{\tilde{E}}_{t,k}\bigg[\frac{D_k}{D_t}\Big(P_{k}-L_k\Big)^+\bigg|\mathcal{H}_t\bigg]\\
&&=B_{t,k}(\mathcal{H}_t)\bigg(\exp\bigg\{\hat{\mu}_{k|t,k}^{\tilde{P}}+\frac{1}{2}\mathcal{D}\big[\Sigma_{k|t}^{\tilde{P}}\big]\bigg\}\odot\Phi\big(d_{k|t}^1\big)-L_k\odot\Phi\big(d_{k|t}^2\big)\bigg)\nonumber
\end{eqnarray}
and
\begin{eqnarray}\label{06052}
&&P_{k|t}(\mathcal{H}_t)=\mathbb{\tilde{E}}_{t,k}\bigg[\frac{D_k}{D_t}\Big(L_k-P_k\Big)^+\bigg|\mathcal{H}_t\bigg]\\
&&=B_{t,k}(\mathcal{H}_t)\bigg(L_k\odot\Phi\big(-d_{k|t}^2\big)-\exp\bigg\{\hat{\mu}_{k|t,k}^{\tilde{P}}+\frac{1}{2}\mathcal{D}\big[\Sigma_{k|t}^{\tilde{P}}\big]\bigg\}\odot\Phi\big(-d_{k|t}^1\big)\bigg),\nonumber
\end{eqnarray}
where $d_{k|t}^1:=\Big(\hat{\mu}_{k|t,k}^{\tilde{P}}+\mathcal{D}\big[\Sigma_{k|t}^{\tilde{P}}\big]-\ln(L_k)\Big)\oslash\sqrt{\mathcal{D}\big[\Sigma_{k|t}^{\tilde{P}}\big]}$ and $d_{k|t}^2:=d_{k|t}^1-\sqrt{\mathcal{D}\big[\Sigma_{k|t}^{\tilde{P}}\big]}$. 

Consequently, in analogous to the call and put options, from equations \eqref{06051} and \eqref{06052} net single premiums of the $T$--year life insurance products without withdrawal and operational expenses, providing a benefit at the end of the year of death (term life insurance) or the end of the year $T$ (pure endowment insurance) are given by
\begin{itemize}
\item[1.] for the $T$--year guaranteed term life insurance, corresponding to segregated fund contract, it holds
\begin{eqnarray}\label{06053}
\lcterm{S}{x+t}{T-t}&=&\sum_{s}\int_{C_{\hat{s}},\Sigma_{\hat{s}}}\bigg\{\sum_{k=t}^{T-1}F_{k+1}\odot P_{k+1|t}(\mathcal{H}_t){}_{k-t}p_{x+t}q_{x+k}\bigg\}\nonumber\\
&\times&\tilde{f}(C_{\hat{s}},\Sigma_{\hat{s}},s|\mathcal{F}_t)dC_{\hat{s}} d\Sigma_{\hat{s}};
\end{eqnarray}
\item[2.] for the $T$--year guaranteed pure endowment insurance, corresponding to segregated fund contract, it holds
\begin{eqnarray}\label{06054}
\lcend{S}{x+t}{T-t}&=&\sum_{s}\int_{C_{\hat{s}},\Sigma_{\hat{s}}}\Big\{F_T\odot P_{T|t}(\mathcal{H}_t){}_{T-t}p_{x+t}\Big\}\nonumber\\
&\times&\tilde{f}(C_{\hat{s}},\Sigma_{\hat{s}},s|\mathcal{F}_t)dC_{\hat{s}} d\Sigma_{\hat{s}};
\end{eqnarray}
\item[3.] for the $T$--year guaranteed unit--linked term life insurance, it holds
\begin{eqnarray}\label{06055}
\lcterm{U}{x+t}{T-t}&=&\sum_{s}\int_{C_{\hat{s}},\Sigma_{\hat{s}}}\bigg\{\sum_{k=t}^{T-1}\Big[F_{k+1}\odot C_{k+1|t}(\mathcal{H}_t)+B_{t,k+1}(\mathcal{H}_t)G_{k+1}\Big]\nonumber\\
&\times&{}_{k-t}p_{x+t}q_{x+k}\bigg\}\tilde{f}(C_{\hat{s}},\Sigma_{\hat{s}},s|\mathcal{F}_t)dC_{\hat{s}} d\Sigma_{\hat{s}};
\end{eqnarray}
\item[4.] for the $T$--year guaranteed unit--linked pure endowment insurance, it holds
\begin{eqnarray}\label{06056}
\lcend{U}{x+t}{T-t}&=&\sum_{s}\int_{C_{\hat{s}},\Sigma_{\hat{s}}}\Big\{\Big[F_T\odot C_{T|t}(\mathcal{H}_t)+B_{t,T}(\mathcal{H}_t)G_T\Big]{}_{T-t}p_{x+t}\Big\}\nonumber\\
&\times&\tilde{f}(C_{\hat{s}},\Sigma_{\hat{s}},s|\mathcal{F}_t)dC_{\hat{s}} d\Sigma_{\hat{s}}.
\end{eqnarray}
\end{itemize}

\section{Locally Risk--Minimizing Strategy}

By introducing the concept of mean--self--financing, \citeA{Follmer86} extended the concept of the complete market into the incomplete market. If a discounted cumulative cost process is a martingale, then a portfolio plan is called mean--self--financing. In a discrete--time case, \citeA{Follmer89} developed a locally risk--minimizing strategy and obtained a recurrence formula for optimal strategy. According to \citeA{Schal94} (see also \citeA{Follmer04}), under a martingale probability measure the locally risk--minimizing strategy and remaining conditional risk--minimizing strategy are the same. Therefore, in this section, we will consider locally risk--minimizing strategies, which correspond to the Black--Scholes call and put options given in Section 3 and the equity--linked life insurance products given in Section 4. In the insurance industry, for continuous--time unit--linked term life and pure endowment insurances with guarantee, locally risk--minimizing strategies are obtained by \citeA{Moller98}. Recently, for discrete--time equity--linked life insurance products, \citeA{Battulga24f} obtained locally risk--minimizing strategies.

To simplify notations we define: for $t=1,\dots,T$, $\overline{P}_t:=(\overline{P}_{1,t},\dots,\overline{P}_{n,t})'$ is a discounted stock price process at time $t$, $\overline{d}_t:=(\overline{d}_{1,t},\dots,\overline{d}_{n,t})'$ is a discounted dividend payment process at time $t$, and $\Delta \overline{P}_t:=\overline{P}_t-\overline{P}_{t-1}$ and $\Delta \overline{d}_t:=\overline{d}_t-\overline{d}_{t-1}$ are difference processes at time $t$ of the discounted stock price and dividend processes, respectively, where $\overline{P}_{i,t}:=D_tP_{i,t}$ and $\overline{d}_{i,t}:=D_td_{i,t}$ are discounted stock price process and discounted dividend payment process, respectively, at time $t$ of $i$--th stock. For $i=1,\dots,n$, let $h_{i,t}$ be a proper number of shares at time $t$ and $h_{i,t}^0$ be a proper amount of cash (risk--free bond) at time $t$, which are required to successfully hedge $i$--th contingent claim $H_{i,T}$, and $\overline{H}_{i,T}$ be a discounted contingent claim, where we assume that the contingent claim $\overline{H}_{i,T}$ is square--integrable under the risk--neutral probability measure. 

To obtain locally risk--minimizing strategy ($h_i^0, h_i$), corresponding to the $i$--th contingent claim $H_{i,T}$, we follow \citeA{Follmer04} and \citeA{Follmer89}. Let $\tilde{\mathbb{P}}^*$ be a martingale probability measure satisfying 
\begin{equation}\label{05.001}
\tilde{\mathbb{E}}^*[\overline{P}_{t+1}+\overline{d}_{t+1}|\mathcal{H}_t]=\overline{P}_t+\overline{d}_t,
\end{equation}
where $\tilde{\mathbb{E}}^*$ is an expectation under the martingale probability measure $\tilde{\mathbb{P}}^*$. Note that $\tilde{\mathbb{P}}^*$ can be any probability measure. For example, one may choose the probability measure $\tilde{\mathbb{P}}^*$ by the risk--neutral probability measure $\tilde{\mathbb{P}}$. In this case, it is very difficult to obtain the locally risk--minimizing strategy ($h_i^0, h_i$). Discounted portfolio value at time $t$, corresponding to the $i$--th contingent claim $H_{i,T}$ is given by
\begin{equation}\label{05.002}
\overline{V}_{i,t}=h_{i,t}'(\overline{P}_t+\overline{d}_t)+\overline{h}_{i,t}^0,~~~i=1,\dots,n,~t=1,\dots,T
\end{equation}
Note that $h_{i,t}$ is a predictable process, which means its value is known at time $(t-1)$, while for the process $h_{i,t}^0$, its value is only known at time $t$,
We suppose that the final discounted portfolio value replicates the $i$--th discounted contingent claim, that is,
\begin{equation}\label{05.004}
\overline{V}_{i,T}=\overline{H}_{i,T}.
\end{equation}
The discounted cumulative cost process is defined by
\begin{equation}\label{05.005}
\overline{C}_{i,t}:=\overline{V}_{i,t}-\sum_{j=1}^th_{i,j}'(\Delta\overline{P}_j+\Delta\overline{d}_j)
\end{equation}
and $C_{i,0}=V_{i,0}$. To obtain locally risk--minimizing strategy ($h_i^0, h_i$), corresponding to the $i$--th contingent claim $H_{i,T}$, we need to minimize a conditional local risk, which is defined by 
\begin{equation}\label{05.006}
R_{i,t}:=\tilde{\mathbb{E}}^*\big[(\overline{C}_{i,t+1}-\overline{C}_{i,t})^2\big|\mathcal{F}_t^x\big]=\tilde{\mathbb{E}}^*\big[(\overline{V}_{i,t+1}-\overline{V}_{i,t}-h_{i,t+1}'(\Delta\overline{P}_{t+1}+\Delta\overline{d}_{t+1}))^2\big|\mathcal{F}_t^x\big],
\end{equation}
where $\mathcal{F}_t^x:=\mathcal{F}_t\vee \mathcal{T}_t^x$ represents available information at time $t$ for analysts. The above optimization problem corresponds to individual contingent claim and it does not take into account correlations between the contingent claims. For this reason, instead of the above optimization problem, we consider the following optimization problem 
\begin{equation}\label{05.007}
R_t:=\sum_{i=1}^n\tilde{\mathbb{E}}^*\big[(\overline{C}_{i,t+1}-\overline{C}_{i,t})^2\big|\mathcal{F}_t^x\big]\longrightarrow \text{min}.
\end{equation}
In order to solve the above optimization problem, we define the following vectors and matrix: $\overline{V}_t:=(\overline{V}_{1,t},\dots,\overline{V}_{n,t})'$ is an $(n\times 1)$ vector of discounted portfolio value process at time $t$, $\overline{H}_T:=(\overline{H}_{1,T},\dots,\overline{H}_{n,T})'$ is an $(n\times 1)$ vector of discounted contingent claim at time $T$, and $h_t:=[h_{1,t}:\dots:h_{n,t}]$ is an $(n\times n)$ share number matrix at time $t$. Then, the optimization problem becomes
\begin{equation}\label{05.008}
R_t:=\tilde{\mathbb{E}}^*\Big[\Big(\overline{V}_{t+1}-\overline{V}_t-h_{t+1}'(\Delta\overline{P}_{t+1}+\Delta\overline{d}_{t+1})\Big)'\Big(\overline{V}_{t+1}-\overline{V}_t-h_{t+1}'(\Delta\overline{P}_{t+1}+\Delta\overline{d}_{t+1})\Big)\Big|\mathcal{F}_t^x\Big]\longrightarrow \text{min}.
\end{equation}
To solve recursively the optimization problem with respect to the parameters $\overline{V}_t$ and $h_t$, we start $t=T-1$ with $\overline{V}_T=\overline{H}_T$. Partial derivatives from the objective function $R_t$ with respect to parameters $\overline{V}_t$ and $h_t$ are
\begin{equation}\label{05.009}
\frac{\partial R_t}{\partial\overline{V}_{t}}=-\tilde{\mathbb{E}}^*\Big[\overline{V}_{t+1}-\overline{V}_t-h_{t+1}'(\Delta\overline{P}_{t+1}+\Delta\overline{d}_{t+1})\Big|\mathcal{F}_t^x\Big]
\end{equation}
and
\begin{equation}\label{05.010}
\frac{\partial R_t}{\partial h_{t+1}}=2\tilde{\mathbb{E}}^*\Big[(\Delta\overline{P}_{t+1}+\Delta\overline{d}_{t+1})(\Delta\overline{P}_{t+1}+\Delta\overline{d}_{t+1})'\Big|\mathcal{F}_t^x\Big]h_{t+1}-2\tilde{\mathbb{E}}^*\Big[(\Delta\overline{P}_{t+1}+\Delta\overline{d}_{t+1})(\overline{V}_{t+1}-\overline{V}_t)'\Big|\mathcal{F}_t^x\Big],
\end{equation}
respectively. Since $\Delta\overline{P}_{t+1}+\Delta\overline{d}_{t+1}$ is a martingale difference, we have that
\begin{equation}\label{05.011}
\overline{V}_t=\tilde{\mathbb{E}}^*\big[\overline{V}_{t+1}\big|\mathcal{F}_t^x\big]~~~\text{and}~~~h_{t+1}=\overline{\Omega}_{t+1}^{-1}\overline{\Lambda}_{t+1}
\end{equation}
for $t=0,\dots,T-1$, where $\overline{\Omega}_{t+1}:=\tilde{\mathbb{E}}^*\big[(\Delta\overline{P}_{t+1}+\Delta\overline{d}_{t+1})(\Delta\overline{P}_{t+1}+\Delta\overline{d}_{t+1})'\big|\mathcal{F}_{t}^x\big]$ is an $(n\times n)$ random matrix and $\overline{\Lambda}_{t+1}:=\tilde{\mathbb{E}}^*\big[(\Delta\overline{P}_{t+1}+\Delta\overline{d}_{t+1})\overline{V}_{t+1}'\big|\mathcal{F}_{t}^x\big]$  is an $(n\times n)$ random matrix. As $\overline{V}_T=\overline{H}_T$, by equation \eqref{05.011} and tower property of conditional expectation, it can be shown that $\overline{\Lambda}_{t+1}:=\tilde{\mathbb{E}}^*\big[(\Delta\overline{P}_{t+1}+\Delta\overline{d}_{t+1})\overline{H}_T'\big|\mathcal{F}_{t}^x\big]$. Consequently, due to equation \eqref{05.002}, under the martingale probability measure $\mathbb{\tilde{P}}^*$, the locally risk--minimizing strategy ($h^0, h$) is given by the following equations:
\begin{equation}\label{05.012}
h_{t+1}=\overline{\Omega}_{t+1}^{-1}\overline{\Lambda}_{t+1}~~~\text{and}~~~h_{t+1}^0=V_{t+1}-h_{t+1}'(P_{t+1}+d_{t+1})
\end{equation}
for $t=0,\dots,T-1$ and $h_0^0=V_0-h_1'(P_0+d_0)$, where $V_{t+1}:=\frac{1}{D_{t+1}}\mathbb{\tilde{E}}^*[\overline{H}_T|\mathcal{F}_{t+1}^x]$ is a value process of the contingent claim $H_T$. If the
contingent claim $H_T$ is generated by stock price process $P_t$ and dividend process $d_t$ for $t=1,\dots,T$, then the process $h_t^0$
becomes predictable, see \citeA{Follmer89}. Note that for $t=1,\dots,T$, since $\sigma$--fields $\mathcal{H}_T$ and $\mathcal{T}_T^x$ are independent, if $X$ is any random variable, which is independent of $\sigma$--field $\mathcal{T}_T^x$ and integrable with respect to the risk--neutral probability measure, then it holds 
\begin{equation}\label{05.013}
\mathbb{\tilde{E}}^*[X|\mathcal{H}_t^x]=\mathbb{\tilde{E}}^*[X|\mathcal{H}_t].
\end{equation}

\subsection{Martingale Probability Measure}

By equation \eqref{06002}, martingale condition \eqref{05.001} is equivalent to the following condition
\begin{equation}\label{05.014}
\tilde{\mathbb{E}}^*\Big[\exp\Big\{u_t-\Big(\tilde{r}_ti_{n}-C_{k,s_t}\psi_t+(I_n-G_{t-1})(\tilde{d}_{t-1}-\tilde{P}_{t-1})+G_{t-1}^{-1}h_{t-1}\Big)\Big\}\Big|\mathcal{H}_{t-1}\Big]=i_{n}.
\end{equation}
Again, to obtain locally risk--minimizing strategies for the options and life insurance products, we use the unique optimal Girsanov kernel process $\theta_t$, which minimizes the variance of a state price density process and relative entropy. According to \citeA{Battulga23a}, the optimal kernel process $\theta_t$ is given by
\begin{equation}\label{05.015}
\theta_t=\Theta_t \bigg(\tilde{r}_ti_n-C_{k,s_t}\psi_t+\big(I_n-G_{t-1}^{-1}\big)(\tilde{d}_{t-1}-\tilde{P}_{t-1})+G_{t-1}^{-1}h_{t-1}-\frac{1}{2}\mathcal{D}[\Sigma_{uu,s_t}]\bigg).
\end{equation}
Consequently, system \eqref{06008} can be written by for $t=1,\dots,T$,
\begin{equation}\label{05.016}
\begin{cases}
\tilde{P}_t=\tilde{\nu}_{P,t}^*-(G_t-I_n)\tilde{d}_t+G_tG_{t-1}^{-1}\tilde{P}_{t-1}+G_t((I_n-G_{t-1}^{-1})J_d+i_{n}j_r') a_{t-1}+G_t\tilde{u}_t^*\\
a_t=\tilde{\nu}_{a,t}^*-\Sigma_{\eta u,s_t}\Sigma_{uu,s_t}^{-1}(I_n-G_{t-1}^{-1})\tilde{P}_{t-1}+\big(I_{n+1}+\Sigma_{\eta u,s_t}\Sigma_{uu,s_t}^{-1}((I_n-G_{t-1}^{-1})J_d+i_nj_r')\big)a_{t-1}+\tilde{\eta}_t^*
\end{cases}
\end{equation}
under the martingale probability measure $\tilde{\mathbb{P}}^*$, where $J_d:=[I_n:0]$ is an $(n\times [n+1])$ matrix, which is used to extract the log dividend process $\tilde{d}_t$ from the process $a_t$, $\tilde{\nu}_{P,t}^*:=G_tG_{t-1}^{-1}h_{t-1}-h_t-\frac{1}{2}G_t\mathcal{D}[\Sigma_{uu,s_t}]$ is an $(n\times 1)$ intercept process of the log stock price process $\tilde{P}_t$ and $\tilde{\nu}_{a,t}^*:=C_{a,s_t}\psi_t-\Sigma_{\eta u,s_t}\Sigma_{uu,s_t}^{-1}\big(C_{k,s_t}\psi_t+\frac{1}{2}\mathcal{D}[\Sigma_{uu,s_t}]-G_{t-1}^{-1}h_{t-1}\big)$ is an $([n+1]\times 1)$ intercept process of the process $a_t$. From the first line of the above system, one conclude that the log price at time $t$ equals
\begin{eqnarray}\label{05.017}
\tilde{P}_t&=&G_t\bigg(\tilde{r}_ti_n-\frac{1}{2}\mathcal{D}[\Sigma_{uu,s_t}]+(I_n-G_{t-1}^{-1})\tilde{d}_{t-1}+G_{t-1}^{-1}\big(\tilde{P}_{t-1}+h_{t-1}\big)\bigg)\nonumber\\
&+&(I_n-G_t)\tilde{d}_t-h_t+G_t\tilde{u}_t^*.
\end{eqnarray}
A joint distribution of a random vector $\tilde{\xi}^*:=(\tilde{\xi}_1^{*\prime},\dots,\tilde{\xi}_T^{*\prime})'$ with $\tilde{\xi}_t^*:=(\tilde{u}_t^{*\prime},\tilde{\eta}_t^{*\prime})'$ is given by
\begin{equation}\label{05.018}
\tilde{\xi}^*~|~\mathcal{H}_0\sim \mathcal{N}\big(0,\Sigma_s\big)
\end{equation}
under the martingale probability measure $\mathbb{\tilde{P}}^*$.

System \eqref{05.016} can be written in VAR(1) form, namely
\begin{equation}\label{05.019}
Q_{0,t}x_t=\tilde{\nu}_t^*+Q_{1,t}^*x_{t-1}+\mathsf{G}_t\tilde{\xi}_t^*
\end{equation} 
under the martingale probability measure $\mathbb{\tilde{P}}^*$, where $\tilde{\nu}_t^*:=(\tilde{\nu}_{P,t}^{*\prime},\tilde{\nu}_{a,t}^{*\prime})'$ and $\tilde{\xi}_t^*:=\big(\tilde{u}_t^{*\prime},\tilde{\eta}_t^{*\prime}\big)'$ are intercept process and white noise processes of the VAR(1) process $x_t$, respectively, and
\begin{equation}\label{05.020}
Q_{1,t}^*:=\begin{bmatrix}
G_tG_{t-1}^{-1} & G_t\big((I_n-G_{t-1}^{-1})J_d+i_{n}j_r'\big) \\
-\Sigma_{\eta u,s_t}\Sigma_{uu,s_t}^{-1}(I_n-G_{t-1}^{-1}) & I_{n+1}+\Sigma_{\eta u,s_t}\Sigma_{uu,s_t}^{-1}\big((I_n-G_{t-1}^{-1})J_d+i_nj_r'\big)
\end{bmatrix}
\end{equation}
is an $(\tilde{n}\times \tilde{n})$ coefficient matrix. By repeating equation \eqref{05.019}, one gets that for $i=t+1,\dots,T$,
\begin{equation}\label{05.021}
x_i=\Pi_{t,i}^*x_t+\sum_{\beta=t+1}^i\Pi_{\beta,i}^*\tilde{\nu}_\beta+\sum_{\beta=t+1}^i\Pi_{\beta,i}^*\mathsf{G}_\beta\tilde{\xi}_\beta^*,
\end{equation}
where the coefficient matrices are for $\beta=t$,
\begin{equation}\label{05.022}
\Pi_{\beta,i}^*:=\prod_{\alpha=\beta+1}^iQ_{0,\alpha}^{-1}Q_{1,\alpha}^*,
\end{equation}
for $\beta=t+1,\dots,i-1$,
\begin{eqnarray}\label{05.023}
\Pi_{\beta,i}^*&:=&\Bigg(\prod_{\alpha=\beta+1}^iQ_{0,\alpha}^{-1}Q_{1,\alpha}^*\Bigg)Q_{0,\beta}^{-1},
\end{eqnarray}
and for $\beta=i$,
\begin{equation}\label{05.024}
\Pi_{\beta,i}^*:=(Q_{0,\beta})^{-1}=\begin{bmatrix}
I_n & -H_\beta\\
0 & I_{n+1}
\end{bmatrix}.
\end{equation}

Therefore, conditional on the information $\mathcal{H}_t$, for $i=t+1,\dots,T$, a expectation at time $i$ and a conditional covariance matrix at times $i_1$ and $i_2$ of the process $x_t$ is given by the following equations
\begin{equation}\label{05.025}
\tilde{\mu}_{i|t}^*:=\mathbb{\tilde{E}}^*\big[x_i\big|\mathcal{H}_t\big]=\Pi_{t,i}^* x_t+\sum_{\beta=t+1}^i\Pi_{\beta,i}^*\tilde{\nu}_\beta^*
\end{equation}
and
\begin{equation}\label{05.026}
\Sigma_{i_1,i_2|t}^*:=\widetilde{\text{Cov}}^*\big[x_{i_1},x_{i_2}\big|\mathcal{H}_t\big]=\sum_{\beta=t+1}^{i_1\wedge i_2}\Pi_{\beta,i_1}^*\mathsf{G}_\beta\Sigma{_{s_{\beta}}}\mathsf{G}_\beta\Pi_{\beta,i_2}^{*\prime}.
\end{equation}
Consequently, due to equation \eqref{05.021}, conditional on the information $\mathcal{H}_t$, a joint distribution of the random vector $\bar{x}_t^c$ is
\begin{equation}\label{05.027}
\bar{x}_t^c~|~\mathcal{H}_t\sim \mathcal{N}\big(\tilde{\mu}_t^{*c},\Sigma_t^{*c}\big),~~~t=0,\dots,T-1
\end{equation}
under the martingale probability measure $\tilde{\mathbb{P}}^*$, where $\tilde{\mu}_t^{*c}:=\big(\tilde{\mu}_{t+1|t}^{*\prime},\dots,\tilde{\mu}_{T|t}^{*\prime}\big)'$ is a conditional expectation and $\Sigma_t^{*c}:=\big(\Sigma_{i_1,i_2|t}^*\big)_{i_1,i_2=t+1}^T$ is a conditional covariance matrix of a random vector $\bar{x}_t^c$. 

\subsection{Forward Probability Measure}

A $(t,u)$--forward probability measure, which is originated from the martingale measure $\tilde{\mathbb{P}}^*$ is defined by
\begin{equation}\label{05.028}
\mathbb{\hat{P}}_{t,u}^*\big[A\big|\mathcal{H}_t\big]:=\frac{1}{D_tB_{t,u}^*(\mathcal{H}_t)}\int_AD_u\mathbb{\tilde{P}}^*\big[\omega|\mathcal{H}_t\big]~~~\text{for all}~A\in \mathcal{H}_T,
\end{equation}
where $B_{t,u}^*(\mathcal{H}_t):=\frac{1}{D_t}\mathbb{\tilde{E}}^*\big[D_u\big|\mathcal{H}_t\big]$. Then, similarly to subsection 3.2, it can be shown that
\begin{equation}\label{05.029}
B_{t,u}^*(\mathcal{H}_t)=\exp\bigg\{-\tilde{r}_{t+1}-\gamma_{t,u}'\tilde{\mu}_t^{*c}+\frac{1}{2}\gamma_{t,u}'\Sigma_t^{*c}\gamma_{t,u}\bigg\}
\end{equation}
and conditional on the information $\mathcal{H}_t$, a joint distribution of the random vector $\bar{x}_t^c$ is given by
\begin{equation}\label{05.030}
\bar{x}_t^c~|~\mathcal{H}_t\sim \mathcal{N}\big(\hat{\mu}_{t,u}^{*c},\Sigma_t^{*c}\big),~~~t=0,\dots,T-1
\end{equation}
under the $(t,u)$--forward probability measure $\hat{\mathbb{P}}_{t,u}^*$, where $\hat{\mu}_{t,u}^{*c}:=\tilde{\mu}_t^{*c}-\Sigma_t^{*c}\gamma_{t,u}$ and $\Sigma_t^{*c}$ are conditional expectation and conditional covariance matrix, respectively, of the random vector $\bar{x}_t^c$. Also, we have that
\begin{eqnarray}\label{05.031}
x_i\overset{d}{=}\Pi_{t,i}^*x_t+\sum_{\beta=t+1}^i\Pi_{\beta,i}^*\Big(\tilde{\nu}_\beta^*-\mathsf{G}_\beta\hat{c}_{\beta|t,u}^*\Big)+\sum_{\beta=t+1}^i\Pi_{\beta,i}^*\mathsf{G}_\beta\hat{\xi}_\beta^*
\end{eqnarray}
under the $(t,u)$--forward probability measure $\hat{\mathbb{P}}_{t,u}^*$, where $\hat{c}_{\beta|t,u}:=\sum_{\alpha=t+1}^{u-1}\big(\Sigma_{s_\beta}^*\mathsf{G}_\beta\Pi_{\beta,\alpha}^{*\prime}\big)_{\tilde{n}}$ is an $(\tilde{n}\times 1)$ vector and a random vector $\hat{\xi}^*:=(\hat{\xi}_1^{*\prime},\dots,\hat{\xi}_T^{*\prime})'$ follows conditional multivariate normal, i.e., $\hat{\xi}^*~|~\mathcal{H}_0\sim \mathcal{N}(0,\Sigma_s)$ under the martingale probability measure $\tilde{\mathbb{P}}^*$. As a result, one may deduce that the log price process $\tilde{P}_t$ is given by
\begin{eqnarray}\label{05.032}
\tilde{P}_t&=&G_t\bigg(\tilde{r}_ti_n-\frac{1}{2}\mathcal{D}[\Sigma_{uu,s_t}]-J_P\hat{c}_{t|t,u}+(I_n-G_{t-1}^{-1})\tilde{d}_{t-1}+G_{t-1}^{-1}\big(\tilde{P}_{t-1}+h_{t-1}\big)\bigg)\nonumber\\
&+&(I_n-G_t)\tilde{d}_t-h_t+G_t\tilde{u}_t^*
\end{eqnarray}
under the $(t,u)$--forward probability measure $\mathbb{\hat{P}}_{t,u}^*$. 

To price the European call and put options and equity--linked life insurance products under the martingale measure $\tilde{\mathbb{P}}^*$, we need a distribution of the log stock price process at time $k$ for $k=t+1,\dots,T$ under the $(t,u)$--forward probability measure $\hat{\mathbb{P}}_{t,u}^*$. For this reason, it follows from equation \eqref{05.030} that the distribution of the log stock price process at time $T$ is given by 
\begin{equation}\label{05.033}
\tilde{P}_k~|~\mathcal{H}_t\sim \mathcal{N}\Big(\hat{\mu}_{k|t,u}^{*\tilde{P}},\Sigma_{k|t}^{*\tilde{P}}\Big)
\end{equation}
for $k=t+1,\dots,T$ under the $(t,u)$--forward probability measure $\mathbb{\hat{P}}_{t,u}^*$, where $\hat{\mu}_{k|t,u}^{*\tilde{P}}:=J_PJ_{k|t}\hat{\mu}_{t,u}^{*c}$ is a conditional expectation and $\Sigma_{k|t}^{*\tilde{P}}:=J_PJ_{k|t}\Sigma_{t}^{*c}J_{k|t}'J_P'$ is a conditional covariance matrix of the log stock price at time $k$ given the information $\mathcal{H}_t$.

Therefore, according to equation \eqref{05.033} and Lemma \ref{lem01}, conditional on the information $\mathcal{H}_t$, price vectors at time $t$ of the Black--Sholes call and put options, corresponding to the martingale probability measure $\tilde{\mathbb{P}}^*$ with strike price vector $K$ and maturity $T$ is given by
\begin{eqnarray}\label{05.034}
C_{T|t}^*(\mathcal{H}_t)&=&\mathbb{\tilde{E}}^*\bigg[\frac{D_T}{D_t}\Big(P_{T}-K\Big)^+\bigg|\mathcal{H}_t\bigg]=B_{t,T}^*(\mathcal{H}_t)\mathbb{\hat{E}}_{t,T}^*\Big[\big(P_{T}-K\big)^+\Big|\mathcal{H}_t\Big]\\
&=&B_{t,T}^*(\mathcal{H}_t)\bigg(\exp\bigg\{\hat{\mu}_{T|t,T}^{*\tilde{P}}+\frac{1}{2}\mathcal{D}\big[\Sigma_{T|t}^{*\tilde{P}}\big]\bigg\}\odot\Phi\big(d_{T|t}^{*1}\big)-K\odot\Phi\big(d_{T|t}^{*2}\big)\bigg)\nonumber
\end{eqnarray}
and
\begin{eqnarray}\label{05.035}
P_{T|t}^*(\mathcal{H}_t)&=&\mathbb{\tilde{E}}^*\bigg[\frac{D_T}{D_t}\Big(K-P_T\Big)^+\bigg|\mathcal{H}_t\bigg]=B_{t,T}^*(\mathcal{H}_t)\mathbb{\hat{E}}_{t,T}^*\Big[\big(K-P_{T}\big)^+\Big|\mathcal{H}_t\Big]\\
&=&B_{t,T}^*(\mathcal{H}_t)\bigg(K\odot\Phi\big(-d_{T|t}^{*2}\big)-\exp\bigg\{\hat{\mu}_{T|t,T}^{*\tilde{P}}+\frac{1}{2}\mathcal{D}\big[\Sigma_{T|t}^{*\tilde{P}}\big]\bigg\}\odot\Phi\big(-d_{T|t}^{*1}\big)\bigg),\nonumber
\end{eqnarray}
where $\mathbb{\hat{E}}_{t,u}^*$ is an expectation under the $(t,u)$--forward probability measure $\mathbb{\hat{P}}_{t,u}^*$, $d_{T|t}^{*1}:=\Big(\hat{\mu}_{T|t,T}^{*\tilde{P}}+\mathcal{D}\big[\Sigma_{T|t}^{*\tilde{P}}\big]-\ln(K)\Big)\oslash\sqrt{\mathcal{D}\big[\Sigma_{T|t}^{*\tilde{P}}\big]}$ is an $(n\times 1)$ vector, and $d_{T|t}^{*2}:=d_{T|t}^1-\sqrt{\mathcal{D}\big[\Sigma_{T|t}^{*\tilde{P}}\big]}$ is an $(n\times 1)$ vector. Consequently, by the tower property of conditional expectation, Lemma \ref{lem02}, and equations \eqref{05.034} and \eqref{05.035}, a price vector at time $t$ of the Black--Sholes call and put options, corresponding to the martingale probability measure $\tilde{\mathbb{P}}^*$ with strike price vector $K$ and maturity $T$ is obtained by
\begin{eqnarray}\label{05.036}
C_{T|t}^*=\sum_{s}\int_{C_{\hat{s}},\Sigma_{\hat{s}}}C_{T|t}^*(\mathcal{H}_t)\tilde{f}^*(C_{\hat{s}},\Sigma_{\hat{s}},s|\mathcal{F}_t)dC_{\hat{s}} d\Sigma_{\hat{s}}
\end{eqnarray}
and
\begin{eqnarray}\label{05.037}
P_{T|t}^*=\sum_{s}\int_{C_{\hat{s}},\Sigma_{\hat{s}}}P_{T|t}^*(\mathcal{H}_t)\tilde{f}^*(C_{\hat{s}},\Sigma_{\hat{s}},s|\mathcal{F}_t)dC_{\hat{s}} d\Sigma_{\hat{s}}.
\end{eqnarray}

Similarly to equations \eqref{05.034} and \eqref{05.035}, for the life insurance products, one can obtain that for $k=t+1,\dots,T$,
\begin{eqnarray}\label{05.038}
&&C_{k|t}^*(\mathcal{H}_t)=\mathbb{\tilde{E}}_{t,k}^*\bigg[\frac{D_k}{D_t}\Big(P_{k}-L_k\Big)^+\bigg|\mathcal{H}_t\bigg]\\
&&=B_{t,k}^*(\mathcal{H}_t)\bigg(\exp\bigg\{\hat{\mu}_{k|t,k}^{*\tilde{P}}+\frac{1}{2}\mathcal{D}\big[\Sigma_{k|t}^{*\tilde{P}}\big]\bigg\}\odot\Phi\big(d_{k|t}^{*1}\big)-L_k\odot\Phi\big(d_{k|t}^{*2}\big)\bigg)\nonumber
\end{eqnarray}
and
\begin{eqnarray}\label{05.039}
&&P_{k|t}^*(\mathcal{H}_t)=\mathbb{\tilde{E}}_{t,k}^*\bigg[\frac{D_k}{D_t}\Big(L_k-P_k\Big)^+\bigg|\mathcal{H}_t\bigg]\\
&&=B_{t,k}^*(\mathcal{H}_t)\bigg(L_k\odot\Phi\big(-d_{k|t}^{*2}\big)-\exp\bigg\{\hat{\mu}_{k|t,k}^{*\tilde{P}}+\frac{1}{2}\mathcal{D}\big[\Sigma_{k|t}^{*\tilde{P}}\big]\bigg\}\odot\Phi\big(-d_{k|t}^{*1}\big)\bigg).\nonumber
\end{eqnarray}
Consequently, it follows from equations \eqref{05.038} and \eqref{05.039} that under the martingale probability measure $\tilde{\mathbb{P}}^*$ net single premiums of the $T$--year life insurance products, providing a benefit at the end of the year of death or the end of the year $T$ are given by
\begin{itemize}
\item[1.] for the $T$--year guaranteed term life insurance, corresponding to segregated fund contract, it holds
\begin{eqnarray}\label{05.040}
\lcterm{S^*}{x+t}{T-t}&=&\sum_{s}\int_{C_{\hat{s}},\Sigma_{\hat{s}}}\bigg\{\sum_{k=t}^{T-1}F_{k+1}\odot P_{k+1|t}^*(\mathcal{H}_t){}_{k-t}p_{x+t}q_{x+k}\bigg\}\nonumber\\
&\times&\tilde{f}^*(C_{\hat{s}},\Sigma_{\hat{s}},s|\mathcal{F}_t)dC_{\hat{s}} d\Sigma_{\hat{s}};
\end{eqnarray}
\item[2.] for the $T$--year guaranteed pure endowment insurance, corresponding to segregated fund contract, it holds
\begin{eqnarray}\label{05.041}
\lcend{S^*}{x+t}{T-t}&=&\sum_{s}\int_{C_{\hat{s}},\Sigma_{\hat{s}}}\Big\{F_T\odot P_{T|t}^*(\mathcal{H}_t){}_{T-t}p_{x+t}\Big\}\nonumber\\
&\times&\tilde{f}^*(C_{\hat{s}},\Sigma_{\hat{s}},s|\mathcal{F}_t)dC_{\hat{s}} d\Sigma_{\hat{s}};
\end{eqnarray}
\item[3.] for the $T$--year guaranteed unit--linked term life insurance, it holds
\begin{eqnarray}\label{05.042}
\lcterm{U^*}{x+t}{T-t}&=&\sum_{s}\int_{C_{\hat{s}},\Sigma_{\hat{s}}}\bigg\{\sum_{k=t}^{T-1}\Big[F_{k+1}\odot C_{k+1|t}^*(\mathcal{H}_t)+B_{t,k+1}^*(\mathcal{H}_t)G_{k+1}\Big]\nonumber\\
&\times&{}_{k-t}p_{x+t}q_{x+k}\bigg\}\tilde{f}^*(C_{\hat{s}},\Sigma_{\hat{s}},s|\mathcal{F}_t)dC_{\hat{s}} d\Sigma_{\hat{s}};
\end{eqnarray}
\item[4.] for the $T$--year guaranteed unit--linked pure endowment insurance, it holds
\begin{eqnarray}\label{05.043}
\lcend{U^*}{x+t}{T-t}&=&\sum_{s}\int_{C_{\hat{s}},\Sigma_{\hat{s}}}\Big\{\Big[F_T\odot C_{T|t}^*(\mathcal{H}_t)+B_{t,T}^*(\mathcal{H}_t)G_T\Big]{}_{T-t}p_{x+t}\Big\}\nonumber\\
&\times&\tilde{f}^*(C_{\hat{s}},\Sigma_{\hat{s}},s|\mathcal{F}_t)dC_{\hat{s}} d\Sigma_{\hat{s}}.
\end{eqnarray}
\end{itemize}

\subsection{Parameters of Locally Risk--Minimizing Strategy}

Now we consider parameters $\overline{\Omega}_{t+1}$ and $\overline{\Lambda}_{t+1}$ in equation \eqref{05.012}. By substituting equation \eqref{05.017} into approximation equation \eqref{06002}, we get that
\begin{equation}\label{05.044}
\ln\Big(\big(P_t+d_t\big)\oslash P_{t-1}\Big)=\tilde{r}_ti_n-\frac{1}{2}\mathcal{D}[\Sigma_{uu,s_t}]+(I_n-G_{t-1}^{-1})\tilde{d}_{t-1}+G_{t-1}^{-1}(\tilde{P}_{t-1}+h_{t-1})-\tilde{P}_{t-1}+\tilde{u}_t^*
\end{equation}
under the martingale probability measure $\mathbb{\tilde{P}}^*$. Consequently, conditional on $\mathcal{H}_t$, approximated distribution of a log sum random variable of the discounted stock price $\overline{P}_{t+1}$ and the discounted dividend payment $\overline{d}_{t+1}$ is given by 
\begin{equation}\label{05.045}
\ln\big(\overline{P}_{t+1}+\overline{d}_{t+1}\big)~|~\mathcal{H}_t\sim \mathcal{N}\bigg(\ln\big(D_t\big)-\frac{1}{2}\mathcal{D}[\Sigma_{uu,s_{t+1}}]+(I_n-G_t^{-1})\tilde{d}_t+G_t^{-1}(\tilde{P}_t+h_t),\Sigma_{uu,s_{t+1}}\bigg)
\end{equation}
under the martingale probability measure $\mathbb{\tilde{P}}^*$. From the above equation, one can easily prove that $\overline{P}_{t+1}+\overline{d}_{t+1}$ is the martingale, i.e., $\tilde{\mathbb{E}}\big[\overline{P}_{t+1}+\overline{d}_{t+1}\big|\mathcal{H}_{t}\big]=\overline{P}_t+\overline{d}_t$. Hence, it follows from the well--known covariance formula of the multivariate log--normal random vector that 
\begin{eqnarray}\label{05.046}
\overline{\Omega}_{t+1}(\mathcal{H}_t)&:=&\tilde{\mathbb{E}}^*\big[(\Delta\overline{P}_{t+1}+\Delta\overline{d}_{t+1})(\Delta\overline{P}_{t+1}+\Delta\overline{d}_{t+1})'\big|\mathcal{H}_{t}\big]\nonumber\\
&=&\exp\{\Sigma_{uu,s_{t+1}}\}\odot\big((\overline{P}_t+\overline{d}_t)(\overline{P}_t+\overline{d}_t)'\big).
\end{eqnarray}
Consequently, by Lemma \ref{lem02} and the tower property of a conditional expectation, we obtain that
\begin{eqnarray}\label{05.047}
\overline{\Omega}_{t+1}=\tilde{\mathbb{E}}^*\big[\overline{\Omega}_{t+1}(\mathcal{H}_t)\big|\mathcal{F}_t\big]=\sum_{s}\int_{C_{\hat{s}},\Sigma_{\hat{s}}}\overline{\Omega}_{t+1}(\mathcal{H}_t)\tilde{f}^*(C_{\hat{s}},\Sigma_{\hat{s}},s|\mathcal{F}_t)dC_{\hat{s}} d\Sigma_{\hat{s}}.
\end{eqnarray}

On the other hand, if we substitute equation \eqref{05.032} into equation \eqref{06002}, we get that 
\begin{equation}\label{05.048}
\ln\big(\overline{P}_{t+1}+\overline{d}_{t+1}\big)=\pi_{t+1|t,u}+\hat{u}_{t+1}^*
\end{equation}
under the $(t,u)$--forward probability measure $\mathbb{\hat{P}}_{t,u}^*$, where
\begin{equation}\label{05.049}
\pi_{t+1|t,u}:=\ln(D_t)i_n-\frac{1}{2}\mathcal{D}[\Sigma_{uu,s_{t+1}}]-J_P\hat{c}_{t|t,u}^*+(I_n-G_t^{-1})\tilde{d}_t+G_t^{-1}(\tilde{P}_t+h_t).
\end{equation}
Thus, its conditional distribution is 
\begin{equation}\label{05.050}
\ln\big(\overline{P}_{t+1}+\overline{d}_{t+1}\big)~|~\mathcal{H}_t\sim \mathcal{N}\Big(\pi_{t+1},\Sigma_{uu,s_{t+1}}\Big)
\end{equation}
under the $(t,u)$--forward probability measure $\mathbb{\hat{P}}_{t,u}^*$.

Let us define a random matrix $\overline{\Lambda}_{t+1}(\mathcal{H}_t^x):=\tilde{\mathbb{E}}^*\big[(\Delta\overline{P}_{t+1}+\Delta\overline{d}_{t+1})\overline{H}_T'\big|\mathcal{H}_{t}^x\big]$. Then, it can be written by
\begin{eqnarray}\label{05.051}
\overline{\Lambda}_{t+1}(\mathcal{H}_t^x)=\tilde{\mathbb{E}}^*\big[\big(\overline{P}_{t+1}+\overline{d}_{t+1}\big)\overline{H}_T'\big|\mathcal{H}_t^x\big]-(\overline{P}_t+\overline{d}_t)\overline{V}_t'(\mathcal{H}_t^x)
\end{eqnarray}
where $\overline{V}_t(\mathcal{H}_t^x):=\tilde{\mathbb{E}}^*\big[\overline{H}_T\big|\mathcal{H}_t^x\big]$ is a discounted value process, corresponding to the contingent claim vector $H_T$ for given information $\mathcal{H}_t^x$. Since the $\sigma$--fields $\mathcal{H}_T$ and $\mathcal{T}_T^x$ are independent, due to the $(t,u)$--forward probability measure $\mathbb{\hat{P}}_{t,u}^*$, the conditional covariance is
\begin{itemize}
\item[($i$)] for the Black--Scholes call and put options,
\begin{eqnarray}\label{05.052}
\overline{\Lambda}_{t+1}(\mathcal{H}_t^x)= D_tB_{t,T}(\mathcal{H}_t)\hat{\mathbb{E}}_{t,T}^*\big[\big(\overline{P}_{t+1}+\overline{d}_{t+1}\big)H_T'\big|\mathcal{H}_t\big]-(\overline{P}_t+\overline{d}_t)\overline{V}_t'(\mathcal{H}_t),
\end{eqnarray}
\item[($ii$)] for the equity--linked pure endowment insurances,
\begin{eqnarray}\label{05.053}
\overline{\Lambda}_{t+1}(\mathcal{H}_t^x)= D_tB_{t,T}(\mathcal{H}_t)\hat{\mathbb{E}}_{t,T}^*\big[\big(\overline{P}_{t+1}+\overline{d}_{t+1}\big)f(P_T)'\big|\mathcal{H}_t\big]{}_{T-t}p_{x+t}-(\overline{P}_t+\overline{d}_t)\overline{V}_t'(\mathcal{H}_t^x),
\end{eqnarray}
\item[($iii$)] and for the equity--linked term life insurances,
\begin{equation}\label{05.054}
\overline{\Lambda}_{t+1}(\mathcal{H}_t^x)= D_t\sum_{k=t}^{T-1}B_{t,k+1}(\mathcal{H}_t)\hat{\mathbb{E}}_{t,k+1}^*\big[\big(\overline{P}_{t+1}+\overline{d}_{t+1}\big)f(P_{k+1})'\big|\mathcal{H}_t\big]{}_{k-t}p_{x+t}q_{x+k}-(\overline{P}_t+\overline{d}_t)\overline{V}_t'(\mathcal{H}_t^x).
\end{equation}
\end{itemize}

In order to obtain the locally risk--minimizing strategies for the Black--Scholes call and put options and the equity--linked life insurance products, we need to calculate the conditional covariances given in equations \eqref{05.052}--\eqref{05.054} for contingent claims $H_T=[P_T-K]^+$ and $H_T=[K-P_T]^+$ and sum insureds $f(P_k)=F_k\odot[P_k-L_k]^++G_k$ and $f(P_k)=F_k\odot[L_k-P_k]^+$ for $k=t+1,\dots,T$. Thus, we need Lemma \ref{lem05}, see Technical Annex. 

To apply the Lemma, we need conditional expectation and covariance matrix of the random vector $\pi_{t+1|t,u}+\hat{u}_{t+1}^*$, which represents exponent of the random vector $\overline{P}_{t+1}+\overline{d}_{t+1}$ and conditional covariance between the random vector $\pi_{t+1|t,u}+\hat{u}_{t+1}^*$ and log price at time $k$, $\tilde{P}_k$ under the $(t,u)$--forward probability measure $\hat{\mathbb{P}}_{t,u}^*$. Since $\tilde{P}_k=J_Px_k$ for $k=t+1,\dots,T$, according to equations \eqref{05.031} and \eqref{05.048}, we have that
\begin{equation}\label{05.055}
\hat{\mu}_{t,u}^{*\pi}:=\mathbb{\hat{E}}_{t,u}^*\big[\ln\big(\overline{P}_{t+1}+\overline{d}_{t+1}\big)\big|\mathcal{H}_t\big]=\pi_{t+1|t,u},
\end{equation}
\begin{equation}\label{05.056}
\Sigma_{\pi}^*:=\widehat{\text{Var}}^*\big[\ln\big(\overline{P}_{t+1}+\overline{d}_{t+1}\big)\big|\mathcal{H}_t\big]=\Sigma_{uu,s_{t+1}},
\end{equation}
and
\begin{equation}\label{05.057}
\Sigma_{\pi,\tilde{P}_k}^*:=\widehat{\text{Cov}}^*\big[\ln\big(\overline{P}_{t+1}+\overline{d}_{t+1}\big),\tilde{P}_k\big|\mathcal{H}_t\big]=J_P\Sigma_{s_{t+1}}\mathsf{G}_{t+1}\Pi_{t+1,k}^{*\prime}J_P'.
\end{equation}

As a result, it follows from the tower property of conditional expectation, Lemmas \ref{lem02} and \ref{lem05}, and equations \eqref{05.052}--\eqref{05.057} that for $t=0,\dots,T-1$, $\overline{\Lambda}_{t+1}$s, which correspond to the call and put options and the equity--linked life insurance products are obtained by the following equations
\begin{itemize}
\item[1.] for the dividend--paying Black--Scholes call option on the weighted asset price, we have
\begin{eqnarray}\label{05.058}
\overline{\Lambda}_{t+1}&=& \sum_{s}\int_{C_{\hat{s}},\Sigma_{\hat{s}}}\bigg\{D_t^2B_{t,T}^*(\mathcal{H}_t)\Psi^+\Big(K;i_n;i_n;\mu_{t,T}^{*\pi};\hat{\mu}_{T|t,T}^{*\tilde{P}};\Sigma_{\pi}^*;\Sigma_{\pi,\tilde{P}_T}^*;\Sigma_{T|t}^{*\tilde{P}}\Big)\Big)\bigg\}\nonumber\\
&&\times \tilde{f}^*(C_{\hat{s}},\Sigma_{\hat{s}},s|\mathcal{F}_t)dC_{\hat{s}} d\Sigma_{\hat{s}}-(\overline{P}_t+\overline{d}_t)\overline{V}_t',
\end{eqnarray}
where the discounted value process is given by $\overline{V}_t=D_tC_{T|t}^*$, see equation \eqref{05.036},
\item[2.] for the dividend--paying Black--Scholes put option on the weighted asset price, we have
\begin{eqnarray}\label{05.059}
\overline{\Lambda}_{t+1}&=& \sum_{s}\int_{C_{\hat{s}},\Sigma_{\hat{s}}}\bigg\{D_t^2B_{t,T}^*(\mathcal{H}_t)\Psi^-\Big(K;i_n;i_n;\mu_{t,T}^{*\pi};\hat{\mu}_{T|t,T}^{*\tilde{P}};\Sigma_{\pi}^*;\Sigma_{\pi,\tilde{P}_T}^*;\Sigma_{T|t}^{*\tilde{P}}\Big)\Big)\bigg\}\nonumber\\
&&\times \tilde{f}^*(C_{\hat{s}},\Sigma_{\hat{s}},s|\mathcal{F}_t)dC_{\hat{s}} d\Sigma_{\hat{s}}-(\overline{P}_t+\overline{d}_t)\overline{V}_t',
\end{eqnarray}
where the discounted value process is given by $\overline{V}_t=D_tP_{T|t}^*$, see equation \eqref{05.037},
\item[3.] for the $T$--year guaranteed term life insurance, corresponding to a segregated fund contract, we have
\begin{eqnarray}\label{05.060}
\overline{\Lambda}_{t+1}&=& \sum_{s}\int_{C_{\hat{s}},\Sigma_{\hat{s}}}\bigg\{D_t^2\sum_{k=t}^{T-1}B_{t,k+1}^*(\mathcal{H}_t)\Psi^-\Big(L_{k+1};i_n;F_{k+1};\mu_{t,k+1}^{*\pi};\hat{\mu}_{k+1|t,k+1}^{*\tilde{P}};\Sigma_{\pi}^*;\nonumber\\
&&\Sigma_{\pi,\tilde{P}_{k+1}}^*;\Sigma_{k+1|t}^{*\tilde{P}}\Big){}_{k-t}p_{x+t}q_{x+k}\Big)\bigg\}\tilde{f}^*(C_{\hat{s}},\Sigma_{\hat{s}},s|\mathcal{F}_t)dC_{\hat{s}} d\Sigma_{\hat{s}}-(\overline{P}_t+\overline{d}_t)\overline{V}_t',
\end{eqnarray}
where the discounted value process is given by $\overline{V}_t=D_t\lcterm{S^*}{x+t}{T-t}$, see equation \eqref{05.040},
\item[4.] for the $T$--year guaranteed pure endowment insurance, corresponding to a segregated fund contract, we have
\begin{eqnarray}\label{05.061}
\overline{\Lambda}_{t+1}&=& \sum_{s}\int_{C_{\hat{s}},\Sigma_{\hat{s}}}\bigg\{D_t^2B_{t,T}^*(\mathcal{H}_t)\Psi^-\Big(L_{T};i_n;F_T;\mu_{t,T}^{*\pi};\hat{\mu}_{T|t,T}^{*\tilde{P}};\Sigma_{\pi}^*;\nonumber\\
&&\Sigma_{\pi,\tilde{P}_{T}}^*;\Sigma_{T|t}^{*\tilde{P}}\Big){}_{T-t}p_{x+t}\Big)\bigg\}\tilde{f}^*(C_{\hat{s}},\Sigma_{\hat{s}},s|\mathcal{F}_t)dC_{\hat{s}} d\Sigma_{\hat{s}}-(\overline{P}_t+\overline{d}_t)\overline{V}_t',
\end{eqnarray}
where the discounted value process is given by $\overline{V}_t=D_t\lcend{S^*}{x+t}{T-t}$, see equation \eqref{05.041},
\item[5.] for the $T$--year guaranteed unit--linked term life insurance, we have
\begin{eqnarray}\label{05.062}
\overline{\Lambda}_{t+1}&=& \sum_{s}\int_{C_{\hat{s}},\Sigma_{\hat{s}}}\bigg\{D_t^2\sum_{k=t}^{T-1}B_{t,k+1}^*(\mathcal{H}_t)\bigg[\Psi^+\Big(L_{k+1};i_n;F_{k+1};\mu_{t,k+1}^{*\pi};\hat{\mu}_{k+1|t,k+1}^{*\tilde{P}};\nonumber\\
&&\Sigma_{\pi}^*;\Sigma_{\pi,\tilde{P}_{k+1}}^*;\Sigma_{k+1|t}^{*\tilde{P}}\Big)+\exp\bigg\{\mu_{t,k+1}^{*\pi}+\frac{1}{2}\mathcal{D}\big[\Sigma_\pi^*\big]\bigg\}G_k'\bigg]{}_{k-t}p_{x+t}q_{x+k}\Big)\bigg\}\\
&&\times \tilde{f}^*(C_{\hat{s}},\Sigma_{\hat{s}},s|\mathcal{F}_t)dC_{\hat{s}} d\Sigma_{\hat{s}}-(\overline{P}_t+\overline{d}_t)\overline{V}_t',\nonumber
\end{eqnarray}
where the discounted value process is given by $\overline{V}_t=D_t\lcterm{U^*}{x+t}{T-t}$, see equation \eqref{05.042},
\item[6.] and for the $T$--year guaranteed unit--linked pure endowment insurance, we have
\begin{eqnarray}\label{05.063}
\overline{\Lambda}_{t+1}&=& \sum_{s}\int_{C_{\hat{s}},\Sigma_{\hat{s}}}\bigg\{D_t^2B_{t,T}^*(\mathcal{H}_t)\bigg[\Psi^+\Big(L_T;i_n;F_{T};\mu_{t,T}^{*\pi};\hat{\mu}_{T|t,T}^{*\tilde{P}};\Sigma_{\pi}^*;\nonumber\\
&&\Sigma_{\pi,\tilde{P}_{T}}^*;\Sigma_{T|t}^{*\tilde{P}}\Big)+\exp\bigg\{\mu_{t,T}^{*\pi}+\frac{1}{2}\mathcal{D}\big[\Sigma_\pi^*\big]\bigg\}G_T'\bigg]{}_{T-t}p_{x+t}\Big)\bigg\}\\
&&\times \tilde{f}^*(C_{\hat{s}},\Sigma_{\hat{s}},s|\mathcal{F}_t)dC_{\hat{s}} d\Sigma_{\hat{s}} d\mathsf{P}-(\overline{P}_t+\overline{d}_t)\overline{V}_t',\nonumber
\end{eqnarray}
where the discounted value process is given by $\overline{V}_t=D_t\lcend{U^*}{x+t}{T-t}$, see equation \eqref{05.043},
\end{itemize}
where the functions $\Psi^+$ and $\Psi^-$ are defined in Lemma \ref{lem05}. As a result, by substituting equations \eqref{05.047} and \eqref{05.058}--\eqref{05.063} into equation \eqref{05.012} we can obtain the locally risk--minimizing strategies for the Black--Scholes call and put options and the equity--linked life insurance products.

\section{Parameter Estimation}

To estimate parameters of the required rate of return $\tilde{k}_t$, \citeA{Battulga23b} used the maximum likelihood method and Kalman filtering. For Bayesian method, which removes duplication in regime vector, we refer to \citeA{Battulga24g}. In this section, we assume that coefficient matrices $C_1,\dots,C_N$, covariance matrices $\Sigma_1,\dots,\Sigma_N$, transition probability matrix $\mathsf{P}$ are deterministic. Here we apply the EM algorithm to estimate parameters of the model. If we combine the equations \eqref{06004}, \eqref{06006}, and \eqref{06007}, then we have that
\begin{equation}
y_t=C_{s_t}\psi_t+Dy_{t-1}+\xi_t,
\end{equation}
where $y_t:=(\tilde{k}_t',\tilde{d}_t',\tilde{r}_t)'$ is an $(\tilde{n}\times 1)$ vector of endogenous variables, $C_{s_t}$ is the $(\tilde{n}\times l)$ matrix, which depends on the regime $s_t$, $D:=\text{diag}\{0_{n\times n},I_{n+1}\}$ is an $(\tilde{n}\times \tilde{n})$ block diagonal matrix. For $t=0,\dots,T$, let $\mathcal{Y}_t$ be the available data at time $t$, which is used to estimate parameters of the model, that is, $\mathcal{Y}_t:=\sigma(y_0,y_1,\dots,y_t).$ Then, it is clear that the log--likelihood function of our model is given by the following equation
\begin{equation}\label{02018}
\mathcal{L}(\theta)=\sum_{t=1}^T\ln\big(f(y_t|\mathcal{Y}_{t-1};\theta)\big)
\end{equation}
where $\theta:=\big(\text{vec}(C_1)',\dots,\text{vec}(C_N)',\text{vec}(\Sigma_1)',\dots,\text{vec}(\Sigma_N)',\text{vec}(\mathsf{P})'\big)'$ is a vector, which consists of all population parameters of the model and $f(y_t|\mathcal{Y}_{t-1};\theta)$ is a conditional density function of the random vector $y_t$ given the information $\mathcal{Y}_{t-1}$. The log--likelihood function is used to obtain the maximum likelihood estimator of the parameter vector $\theta$. Note that the log--likelihood function depends on all observations, which are collected in $\mathcal{Y}_T$, but does not depend on regime--switching process $s_t$, whose values are unobserved. If we assume that the regime--switching process in regime $j$ at time $t$, then because conditional on the information $\mathcal{Y}_{t-1}$, $\xi_t$ follows a multivariate normal distribution with mean zero and covariance matrix $\Sigma_j$, the conditional density function of the random vector $y_t$ is given by the following equation
\begin{eqnarray}\label{02019}
\eta_{t,j}&:=&f(y_t|s_t=j,\mathcal{Y}_{t-1};\alpha)\\
&=&\frac{1}{(2\pi)^{\tilde{n}/2}|\Sigma_j|^{1/2}}\exp\bigg\{-\frac{1}{2}\Big(y_t-C_j\psi_t-Dy_{t-1}\Big)'\Sigma_j^{-1}\Big(y_t-C_j\psi_t-Dy_{t-1}\Big)\bigg\}\nonumber
\end{eqnarray}
for $t=1,\dots,T$ and $j=1,\dots,N$, where $\alpha:=\big(\text{vec}(C_1)',\dots,\text{vec}(C_N)',\text{vec}(\Sigma_1)',\dots,\text{vec}(\Sigma_N)'\big)'$ is a parameter vector, which differs from the vector of all parameters $\theta$ by the transition probability matrix $\mathsf{P}$. For all $t=1,\dots,T$, we collect the conditional density functions of the price at time $t$ into an $(N\times 1)$ vector $\eta_t$, that is, $\eta_t:=(\eta_{t,1},\dots,\eta_{t,N})'$. 

Let us denote a probabilistic inference about the value of the regime--switching process $s_t$ is equal to $j$, based on the information $\mathcal{Y}_t$ and the parameter vector $\theta$ by $\mathbb{P}(s_t=j|\mathcal{Y}_t,\theta)$. Collect these conditional probabilities $\mathbb{P}(s_t=j|\mathcal{Y}_t,\theta)$ for $j=1,\dots,N$ into an $(N\times 1)$ vector $z_{t|t}$, that is, $z_{t|t}:=\big(\mathbb{P}(s_t=1|\mathcal{Y}_t;\theta),\dots,\mathbb{P}(s_t=N|\mathcal{Y}_t;\theta)\big)'$. Also, we need a probabilistic forecast about the value of the regime--switching process at time $t+1$ is equal to $j$ conditional on data up to and including time $t$. Collect these forecasts into an $(N\times 1)$ vector $z_{t+1|t}$, that is, $z_{t+1|t}:=\big(\mathbb{P}(s_{t+1}=1|\mathcal{Y}_t;\theta),\dots,\mathbb{P}(s_{t+1}=N|\mathcal{Y}_t;\theta)\big)'$.  

The probabilistic inference and forecast for each time $t=1,\dots,T$ can be found by iterating on the following pair of equations: 
\begin{equation}\label{02021}
z_{t|t}=\frac{(z_{t|t-1}\odot\eta_t)}{i_N'(z_{t|t-1}\odot\eta_t)}~~~\text{and}~~~z_{t+1|t}=\hat{\mathsf{P}}'z_{t|t},~~~t=1,\dots,T,
\end{equation}
see book of \citeA{Hamilton94}, where $\eta_t$ is the $(N\times 1)$ vector, whose $j$-th element is given by equation \eqref{02019}, $\hat{\mathsf{P}}$ is the $(N\times N)$ transition probability matrix, which is defined by omitting the first row of the matrix $\mathsf{P}$, and $i_N$ is the $(N\times 1)$ vector, whose elements equal 1. Given a starting value $z_{1|0}$ and an assumed value for the population parameter vector $\theta$, one can iterate on \eqref{02021} for $t=1,\dots,T$ to calculate the values of $z_{t|t}$ and $z_{t+1|t}$. 

To obtain MLE of the population parameters, in addition to the inferences and forecasts, we need a smoothed inference about the regime--switching process at time $t$ is equal to $j$ based on full information $\mathcal{Y}_T$. Collect these smoothed inferences into an $(N\times 1)$ vector $z_{t|T}$, that is, $z_{t|T}:=\big(\mathbb{P}(s_t=1|\mathcal{Y}_T;\theta),\dots,\mathbb{P}(s_t=N|\mathcal{Y}_T;\theta)\big)'$. The smoothed inferences can be obtained by using the \citeA{Battulga24g}'s exact smoothing algorithm:
\begin{equation}\label{08110}
z_{T-1|T}=\frac{\big(\big(\hat{\mathsf{P}}\mathsf{H}_Ti_N\big)\odot z_{T-1|T-1}\big)}{i_N'(z_{T|T-1}\odot\eta_T)}
\end{equation}
and for $t=T-2,\dots,1$,
\begin{equation}\label{08180}
z_{t|T}=\frac{\Big(\big(\hat{\mathsf{P}}\mathsf{H}_{t+1}\big(z_{t+1|T}\oslash z_{t+1|t+1}\big)\big)\odot z_{t|t}\Big)}{i_N'(z_{t+1|t}\odot\eta_{t+1})},
\end{equation}
where $\oslash$ is an element--wise division of two vectors and $\mathsf{H}_{t+1}:=\mathrm{diag}\{\eta_{t+1,1},\dots,\eta_{t+1,N}\}$ is an $(N\times N)$ diagonal matrix. For $t=2,\dots,T$, joint probability of the regimes $s_{t-1}$ and $s_t$ is
\begin{equation}\label{08165}
\mathbb{P}(s_{t-1}=i,s_t=j|\mathcal{F}_T;\theta)=\frac{(z_{t|T})_j\eta_{t,j}p_{s_{t-1}s_t}(z_{t-1|t-1})_i}{(z_{t|t})_ji_N'(z_{t|t-1}\odot\eta_t)},
\end{equation}
where for a generic vector $o$, $(o)_j$ denotes $j$--th element of the vector $o$.

The EM algorithm is an iterative method to obtain (local) maximum likelihood estimate of parameters of distribution functions, which depend on unobserved (latent) variables. The EM algorithm alternates an expectation (E) step and a maximization (M) step. In E--Step, we consider that conditional on the full information $\mathcal{Y}_T$ and parameter at iteration $k$, $\theta^{[k]}$, expectation of augmented log--likelihood of the data $\mathcal{Y}_T$ and unobserved (latent) transition probability matrix $\mathsf{P}$ . The E--Step defines a objective function $\mathcal{L}$,
namely,
\begin{eqnarray}
\mathcal{L}&=&\mathbb{E}\bigg[-\frac{T\tilde{n}}{2}\ln(2\pi)-\frac{1}{2}\sum_{t=1}^T\sum_{j=1}^N\ln(\Sigma_j)1_{\{s_t=j\}}\nonumber\\
&-&\frac{1}{2}\sum_{t=1}^T\sum_{j=1}^N\big(y_t-C_j\psi_t-D y_{t-1}\big)'\Sigma_j^{-1}\big(y_t-C_j\psi_t-D y_{t-1}\big)1_{\{s_t=j\}}\\
&+&\sum_{j=1}^Np_{0j}1_{\{s_1=j\}}+\sum_{t=2}^T\sum_{i=1}^N\sum_{j=1}^N\ln(p_{ij})1_{\{s_{t-1}=i,s_t=j\}}-\sum_{i=0}^N\mu_i\bigg(\sum_{j=1}^Np_{ij}-1\bigg)\bigg|\mathcal{Y}_T;\theta^{[k]}\bigg]\nonumber
\end{eqnarray}

In M--Step, to obtain parameter estimate of next iteration $\theta^{[k+1]}$, one maximizes the objective function with respect to the parameter $\theta$. First, let us consider partial derivative from the objective function with respect to the parameter $C_j$ for $j=1,\dots,N$. Let $c_j$ is a vectorization of the matrix $C_j$, i.e., $c_j=\text{vec}(C_j)$. Since $C_j\psi_t=(\psi_t'\otimes I_{2n+1})c_j$, we have that
\begin{equation}
\frac{\partial \mathcal{L}}{\partial c_j'}=\sum_{t=1}^T\big(y_t-\big(\psi_t'\otimes I_{2n+1}\big)c_j-D y_{t-1}\big)'\Sigma_j^{-1}\big(\psi_t'\otimes I_{2n+1}\big)\big(z_{t|T}^{[k]}\big)_j,
\end{equation}
where $z_{t|T}^{[k]}$ is defined by replacing $\theta$ with $\theta^{[k]}$ in equations \eqref{08110} and \eqref{08180}. Consequently, an estimator at iteration $(k+1)$ of the parameter $c_j$ is given by
\begin{eqnarray}\label{•}
c_j^{[k+1]}&=&\bigg(\sum_{t=1}^T\big(\psi_t\otimes I_{2n+1}\big)\Sigma_j^{-1}\big(\psi_t\otimes I_{2n+1}\big)\big(z_{t|T}^{[k]}\big)_j\bigg)^{-1}\nonumber\\
&\times&\sum_{t=1}^T\big(\psi_t\otimes I_{2n+1}\big)\Sigma_j^{-1}\big(y_t-Dy_{t-1}\big)\big(z_{t|T}^{[k]}\big)_j.
\end{eqnarray}
As a result, an estimator at iteration $(k+1)$ of the parameter $C_j$ is given by
\begin{equation}
C_j^{[k+1]}=\big(\bar{y}_j^{[k]}-D\bar{y}_{j,-1}^{[k]}\big)\big(\bar{\psi}_j^{[k]}\big)'\big(\bar{\psi}_j^{[k]}\big(\bar{\psi}_j^{[k]}\big)'\big)^{-1},
\end{equation}
where $\bar{y}_j^{[k]}:=\Big[y_1\sqrt{\big(z_{1|T}^{[k]}\big)_j}:\dots:y_T\sqrt{\big(z_{T|T}^{[k]}\big)_j}\Big]$ is a $(\tilde{n}\times T)$ matrix, $\bar{y}_{j,-1}^{[k]}:=\Big[y_0\sqrt{\big(z_{1|T}^{[k]}\big)_j}:\dots:y_{T-1}\sqrt{\big(z_{T|T}^{[k]}\big)_j}\Big]$ is a $(\tilde{n}\times T)$ matrix, and $\bar{\psi}_j^{[k]}:=\Big[\psi_1\sqrt{\big(z_{1|T}^{[k]}\big)_j}:\dots:\psi_T\sqrt{\big(z_{T|T}^{[k]}\big)_j}\Big]$ is an $(l\times T)$ matrix. Second, a partial derivative from the objective function with respect to the parameter $\Sigma_j$ for $j=1,\dots,N$ is given by
\begin{eqnarray}
\frac{\partial \mathcal{L}}{\partial \Sigma_j}&=&-\frac{1}{2}\Sigma_j^{-1}\sum_{t=1}^T\big(z_{t|T}^{[k]}\big)_j\\
&+&\frac{1}{2}\sum_{t=1}^T\Sigma_j^{-1}\big(y_t-C_j\psi_t-D y_{t-1}\big)\big(y_t-C_j\psi_t-D y_{t-1}\big)'\Sigma_j^{-1}\big(z_{t|T}^{[k]}\big)_j.\nonumber
\end{eqnarray}
Consequently, an estimator at iteration $(k+1)$ of the parameter $\Sigma_j$ is given by
\begin{equation}
\Sigma_j^{[k+1]}=\frac{1}{\sum_{t=1}^T\big(z_{t|T}^{[k]}\big)_j}\sum_{t=1}^T\big(y_t-C_j^{[k+1]}\psi_t-D y_{t-1}\big)\big(y_t-C_j^{[k+1]}\psi_t-D y_{t-1}\big)'\big(z_{t|T}^{[k]}\big)_j.
\end{equation}
Third, a partial derivative from the objective function with respect to the parameter $p_{ij}$ for $i,j=1,\dots,N$ is given by
\begin{eqnarray}
\frac{\partial \mathcal{L}}{\partial p_{ij}}=\frac{1}{p_{ij}}\sum_{t=2}^T\mathbb{P}\big(s_{t-1}=i,s_t=j|\mathcal{F}_T;\theta^{[k]}\big)-\mu_i.
\end{eqnarray}
Consequently, an estimator at iteration $(k+1)$ of the parameter $p_{ij}$ is given by
\begin{equation}\label{•}
p_{ij}^{[k+1]}=\frac{1}{\sum_{t=2}^T\big(z_{t|T}^{[k]}\big)_i}\sum_{t=2}^T\mathbb{P}\big(s_{t-1}=i,s_t=j|\mathcal{F}_T;\theta^{[k]}\big)
\end{equation}
where the joint probability $\mathbb{P}\big(s_{t-1}=i,s_t=j|\mathcal{F}_T;\theta^{[k]}\big)$ is calculated by equation \eqref{08165}. Fourth, a partial derivative from the objective function with respect to the parameter $p_{0j}$ for $j=1,\dots,N$ is given by
\begin{eqnarray}
\frac{\partial \mathcal{L}}{\partial p_{0j}}=\frac{1}{p_{0j}}\mathbb{P}\big(s_1=j|\mathcal{F}_T;\theta^{[k]}\big)-\mu_0.
\end{eqnarray}
Consequently, an estimator at iteration $(k+1)$ of the parameter $p_{0j}$ is given by
\begin{equation}
p_{0j}^{[k+1]}=\big(z_{1|T}^{[k]}\big)_j.
\end{equation}
Alternating between these steps, the EM algorithm produces improved parameter estimates at each step (in the sense that the value of the original log--likelihood is continually increased) and it converges to the maximum likelihood estimates of the parameters. 

To use the suggested model, we need to calculate the mean log dividend--to--price ratio $\mu_t$ applying the parameter estimation. According to equations \eqref{06004}, \eqref{06005}, and \eqref{06006}, we have that
\begin{equation}\label{06085}
\tilde{d}_t-\tilde{P}_t-h_t=G_t\big(\tilde{d}_{t-1}-\tilde{P}_{t-1}+C_{d,s_t}\psi_t-C_{k,s_t}\psi_t\big)+G_tv_t-G_tu_t.
\end{equation}
By taking expectation with respect to the real probability measure $\mathbb{P}$, one finds that
\begin{equation}\label{06086}
\mu_t-h_t=G_t\big(\mu_{t-1}+\mathbb{E}[C_{d,s_t}|\mathcal{F}_0]\psi_t-\mathbb{E}[C_{k,s_t}|\mathcal{F}_0]\psi_t\big),
\end{equation}
where the expectations equal 
\begin{equation}\label{ad006}
\mathbb{E}[C_{d,s_t}|\mathcal{F}_0]=\sum_{j=1}^NC_{d,j}\mathbb{P}[s_t=j|\mathcal{F}_0]=\sum_{j=1}^NC_{d,j}\big(p_0\hat{\mathsf{P}}^t\big)_j
\end{equation}
and
\begin{equation}\label{•}
\mathbb{E}[C_{k,s_t}|\mathcal{F}_0]=\sum_{j=1}^NC_{k,j}\big(p_0\hat{\mathsf{P}}^t\big)_j.
\end{equation}
On the other hand, the definition of the linearization parameter $h_t$ implies that
\begin{equation}\label{06087}
\mu_t-h_t=G_t\big(\mu_t-\ln(g_t)\big).
\end{equation}
Therefore, for successive values of the parameter $\mu_t$, it holds
\begin{equation}\label{06088}
\mu_t=\mu_{t-1}+\ln(g_t)+\mathbb{E}[C_{d,s_t}|\mathcal{F}_0]\psi_t-\mathbb{E}[C_{k,s_t}|\mathcal{F}_0]\psi_t,
\end{equation}
where the above recurrence equation's initial value is $\mu_0=\tilde{d}_0-\tilde{P}_0$. One may be solve the above nonlinear system of equations by numerical methods for the parameter $\mu_t$. Here we consider Newton's iteration method to obtain solution of the system of equations. If we substitute equation $g_t=i_n+\exp\{\mu_t\}$ into the above system of equations, then we have that
\begin{equation}\label{•}
M_t(\mu):=\mu-\ln\big(i_n+\exp(\mu)\big)-\mu_{t-1}-\mathbb{E}[C_{d,s_t}|\mathcal{F}_0]\psi_t+\mathbb{E}[C_{k,s_t}|\mathcal{F}_0]\psi_t=0.
\end{equation}
Since an inverse matrix of Jacobian of the function $M_t(\mu)$ is $J(\mu)^{-1}:=\text{diag}\{i_n+\exp(\mu)\}$, Newton's iteration is given by
\begin{equation}\label{•}
\mu_{j+1,t}=\mu_{j,t}-J(\mu_{j,t})^{-1}M_t(\mu_{j,t}),
\end{equation}
where $\mu_{0,t}$ is an initial guess value of the mean log dividend--to--price ratio $\mu_t$. 

\section{Technical Annex}

Here we give the Lemmas, which are used in the paper.

\begin{lem}\label{lem01}
Let $X\sim \mathcal{N}(\mu,\sigma^2)$. Then for all $K>0$,
\begin{equation*}\label{06114}
\mathbb{E}\big[\big(e^X-K\big)^+\big]=\exp\bigg\{\mu+\frac{\sigma^2}{2}\bigg\}\Phi(d_1)-K\Phi(d_2)
\end{equation*}
and
\begin{equation*}\label{06115}
\mathbb{E}\big[\big(K-e^X\big)^+\big]=K\Phi(-d_2)-\exp\bigg\{\mu+\frac{\sigma^2}{2}\bigg\}\Phi(-d_1),
\end{equation*}
where $d_1:=\big(\mu+\sigma^2-\ln(K)\big)/\sigma$, $d_2:=d_1-\sigma$, and $\Phi(x)=\int_{-\infty}^x\frac{1}{\sqrt{2\pi}}e^{-u^2/2}du$ is the cumulative standard normal distribution function.
\end{lem}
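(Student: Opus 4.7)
The plan is to treat this as the classical Black--Scholes-type lognormal expectation identity and reduce it to two ingredients: the probability $\mathbb{P}[X>\ln K]$ and a shifted lognormal integral $\mathbb{E}[e^X\mathbf{1}_{\{X>\ln K\}}]$. First I would write $(e^X-K)^+=(e^X-K)\mathbf{1}_{\{X>\ln K\}}$ and split
\[
\mathbb{E}\big[(e^X-K)^+\big]=\mathbb{E}\big[e^X\mathbf{1}_{\{X>\ln K\}}\big]-K\,\mathbb{P}\big[X>\ln K\big].
\]
The second piece is immediate from $X\sim\mathcal{N}(\mu,\sigma^2)$: standardizing $Z=(X-\mu)/\sigma$ gives $\mathbb{P}[X>\ln K]=\Phi((\mu-\ln K)/\sigma)=\Phi(d_2)$.

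For the first piece I would write the density explicitly and complete the square in the exponent:
\[
x-\frac{(x-\mu)^2}{2\sigma^2}=\mu+\frac{\sigma^2}{2}-\frac{(x-\mu-\sigma^2)^2}{2\sigma^2},
\]
which lets the factor $e^{\mu+\sigma^2/2}$ come outside the integral and leaves a Gaussian density in the shifted variable $x-\mu-\sigma^2$. A change of variables $y=(x-\mu-\sigma^2)/\sigma$ turns the remaining integral into $\Phi((\mu+\sigma^2-\ln K)/\sigma)=\Phi(d_1)$, giving the call formula.

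For the put, rather than redo the computation, I would appeal to put--call parity in its algebraic form $(K-e^X)^+=(e^X-K)^+-e^X+K$. Taking expectations and using $\mathbb{E}[e^X]=e^{\mu+\sigma^2/2}$ (the mgf of a normal at $1$), one gets
\[
\mathbb{E}\big[(K-e^X)^+\big]=e^{\mu+\sigma^2/2}\Phi(d_1)-K\Phi(d_2)-e^{\mu+\sigma^2/2}+K,
\]
and regrouping with $1-\Phi(d_1)=\Phi(-d_1)$, $1-\Phi(d_2)=\Phi(-d_2)$ yields the put formula.

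There is essentially no hard step here; the only thing requiring any care is the algebra of completing the square and verifying that the upper limit of integration transforms to $-d_1$ (so that the tail probability becomes $\Phi(d_1)$), and matching the resulting arguments to the stated $d_1,d_2$. I would double-check the sign conventions by noting the consistency check $\mathbb{E}[(e^X-K)^+]-\mathbb{E}[(K-e^X)^+]=e^{\mu+\sigma^2/2}-K$, which both formulas must reproduce.
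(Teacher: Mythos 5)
Your proof is correct and complete: the decomposition into $\mathbb{E}[e^X\mathbf{1}_{\{X>\ln K\}}]-K\,\mathbb{P}[X>\ln K]$, the completion of the square giving $\Phi(d_1)$, the identification $\mathbb{P}[X>\ln K]=\Phi(d_2)$, and the put--call parity step all check out. The paper itself offers no argument for this lemma, only a citation to earlier work, and your computation is exactly the standard one those references contain.
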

\begin{proof}
See, e.g., \citeA{Battulga24c} or \citeA{Battulga24a}.
\end{proof}

\begin{lem}\label{lem02}
Conditional on $\mathcal{F}_t$, joint densities of $\big(\Pi_{\hat{s}},\Sigma_{\hat{s}},s\big)$ are given by
\begin{equation}\label{ad001}
\tilde{f}\big(C_{\hat{s}},\Sigma_{\hat{s}},s|\mathcal{F}_t\big)=\frac{\tilde{f}(\bar{y}_t|C_{\alpha},\Sigma_{\alpha},\bar{s}_t,\mathcal{F}_0)f(C_{\hat{s}},\Sigma_{\hat{s}}|\hat{s},\mathcal{F}_0)f(s|\mathcal{F}_0)}{\displaystyle \sum_{\bar{s}_t}\bigg(\int_{C_{\alpha},\Sigma_{\alpha}}\tilde{f}(\bar{y}_t|C_{\alpha},\Sigma_{\alpha},\bar{s}_t,\mathcal{F}_0)f(C_{\alpha},\Sigma_{\alpha}|\alpha,\mathcal{F}_0)dC_\alpha d\Sigma_\alpha\bigg)f(\bar{s}_t|\mathcal{F}_0)}
\end{equation}
under the risk--neutral probability measure $\tilde{\mathbb{P}}$ and
\begin{equation}\label{ad001}
\tilde{f}^*\big(C_{\hat{s}},\Sigma_{\hat{s}},s|\mathcal{F}_t\big)=\frac{\tilde{f}^*(\bar{y}_t|C_{\alpha},\Sigma_{\alpha},\bar{s}_t,\mathcal{F}_0)f(C_{\hat{s}},\Sigma_{\hat{s}}|\hat{s},\mathcal{F}_0)f(s|\mathcal{F}_0)}{\displaystyle \sum_{\bar{s}_t}\bigg(\int_{C_{\alpha},\Sigma_{\alpha}}\tilde{f}^*(\bar{y}_t|C_{\alpha},\Sigma_{\alpha},\bar{s}_t,\mathcal{F}_0)f(C_{\alpha},\Sigma_{\alpha}|\alpha,\mathcal{F}_0)dC_\alpha d\Sigma_\alpha\bigg)f(\bar{s}_t|\mathcal{F}_0)}
\end{equation}
under the martingale probability measure $\tilde{\mathbb{P}}^*$ for $t=1,\dots,T$. Here for $t=1,\dots,T$,
\begin{equation}\label{07043}
\tilde{f}(\bar{y}_t|C_{\alpha},\Sigma_{\alpha},\bar{s}_t,\mathcal{F}_0)=\frac{1}{(2\pi)^{nt/2}|\Sigma_{11}|^{1/2}}\exp\Big\{-\frac{1}{2}\big(\bar{y}_t-\mu_1\big)'\Sigma_{11}^{-1}\big(\bar{y}_t-\mu_1\big)\Big\}
\end{equation}
and
\begin{equation}\label{07043}
\tilde{f}^*(\bar{y}_t|C_{\alpha},\Sigma_{\alpha},\bar{s}_t,\mathcal{F}_0)=\frac{1}{(2\pi)^{nt/2}|\Sigma_{11}^*|^{1/2}}\exp\Big\{-\frac{1}{2}\big(\bar{y}_t-\mu_1^*\big)'(\Sigma_{11}^*)^{-1}\big(\bar{y}_t-\mu_1^*\big)\Big\},
\end{equation}
where $\mu_1:=\big(\tilde{\mu}_{1|0}',\dots,\tilde{\mu}_{t|0}'\big)'$, $\Sigma_{11}:=\big(\Sigma_{i_1,i_2|0}\big)_{i_1,i_2=1}^t$, $\mu_1^*:=\big(\tilde{\mu}_{1|0}^{*\prime},\dots,\tilde{\mu}_{t|0}^{*\prime}\big)'$, and $\Sigma_{11}^*:=\big(\Sigma_{i_1,i_2|0}^*\big)_{i_1,i_2=1}^t$. 
\end{lem}
\begin{proof}
See, \citeA{Battulga24a}.
\end{proof}

\begin{lem}\label{lem05}
Let $\alpha_1\in \mathbb{R}^{n_1}$ and $\alpha_2\in\mathbb{R}^{n_2}$ be fixed vectors, and $X_1\in \mathbb{R}^{n_1}$ and $X_2\in \mathbb{R}^{n_2}$ be random vectors and their joint distribution is given by
$$\begin{bmatrix}
X_1 \\ X_2
\end{bmatrix} \sim \mathcal{N}\bigg(\begin{bmatrix}
\mu_1 \\ \mu_2
\end{bmatrix},\begin{bmatrix}
\Sigma_{11} & \Sigma_{12}\\
\Sigma_{21} & \Sigma_{22}
\end{bmatrix}\bigg).$$
Then, for all $L\in\mathbb{R}_{+}^{n_2}$, it holds
\begin{eqnarray*}
&&\Psi^+\big(L;\alpha_1;\alpha_2;\mu_1;\mu_2;\Sigma_{11};\Sigma_{12};\Sigma_{22}\big):=\mathbb{E}\bigg[\Big(\alpha_1\odot e^{X_1}\Big)\Big(\alpha_2\odot\big(e^{X_2}-L\big)^+\Big)'\bigg]\\
&&=\bigg(\Big(\alpha_1\odot\mathbb{E}\big[e^{X_1}\big]\Big)\Big(\alpha_2\odot\mathbb{E}\big[e^{X_2}\big]\Big)'\bigg)\odot e^{\Sigma_{12}}\odot\Phi\bigg(i_{n_1}\otimes d_1'+\Sigma_{12}\mathrm{diag}\big\{\mathcal{D}\big[\Sigma_{22}\big]\big\}^{-1/2}\bigg)\\
&&-\bigg(\Big(\alpha_1\odot\mathbb{E}\big[e^{X_1}\big]\Big)\Big(\alpha_2\odot L\Big)'\bigg)\odot\Phi\bigg(i_{n_1}\otimes d_2'+\Sigma_{12}\mathrm{diag}\big\{\mathcal{D}\big[\Sigma_{22}\big]\big\}^{-1/2}\bigg)
\end{eqnarray*}
and
\begin{eqnarray*}
&&\Psi^-\big(L;\alpha_1;\alpha_2;\mu_1;\mu_2;\Sigma_{11};\Sigma_{12};\Sigma_{22}\big):=\mathbb{E}\bigg[\Big(\alpha_1\odot e^{X_1}\Big)\Big(\alpha_2\odot\big(L-e^{X_2}\big)^+\Big)'\bigg]\\
&&=\bigg(\Big(\alpha_1\odot\mathbb{E}\big[e^{X_1}\big]\Big)\Big(\alpha_2\odot L\Big)'\bigg)\odot \Phi\bigg(-i_{n_1}\otimes d_2'-\Sigma_{12}\mathrm{diag}\big\{\mathcal{D}\big[\Sigma_{22}\big]\big\}^{-1/2}\bigg)\\
&&-\bigg(\Big(\alpha_1\odot\mathbb{E}\big[e^{X_1}\big]\Big)\Big(\alpha_2\odot \mathbb{E}\big[e^{X_2}\big]\Big)'\bigg)\odot e^{\Sigma_{12}}\odot\Phi\bigg(-i_{n_1}\otimes d_1'-\Sigma_{12}\mathrm{diag}\big\{\mathcal{D}\big[\Sigma_{22}\big]\big\}^{-1/2}\bigg),
\end{eqnarray*}
where for each $i=1,2$, $\mathbb{E}\big[e^{X_i}\big]=e^{\mu_i+1/2\mathcal{D}[\Sigma_{ii}]}$ is the expectation of the multivariate log--normal random vector, $d_1:=\big(\mu_2+\mathcal{D}[\Sigma_{22}]-\ln(L)\big)\oslash\sqrt{\mathcal{D}[\Sigma_{22}]}$, $d_2:=d_1-\sqrt{\mathcal{D}[\Sigma_{22}]}$, and $\Phi(x)=\int_{-\infty}^x\frac{1}{\sqrt{2\pi}}e^{-u^2/2}du$ is the cumulative standard normal distribution function.
\end{lem}
\begin{proof}
See \citeA{Battulga24c}.
\end{proof}

\section{Conclusion}

In this paper, we introduce a dynamic Gordon growth model, augmented by a spot interest rate, which is modeled by the unit--root process with drift and dividends, which are modeled by the Gordon growth model. It is assumed that the regime--switching process is generated by a homogeneous Markov process. Using the risk--neutral valuation method and locally risk--minimizing strategy, we obtain pricing and hedging formulas for the dividend--paying European call and put options, segregated funds, and unit--linked life insurance products. Finally, to estimate the parameters of our model, we provide EM algorithm under the assumption that the coefficient matrix, covariance matrix, and transition probability matrix are deterministic.

\bibliographystyle{apacite}
\bibliography{References}

\end{document}